\documentclass{amsart}

\usepackage{amsfonts}
\usepackage{amsmath}
\usepackage{amssymb}
\usepackage{graphicx}

\setcounter{MaxMatrixCols}{10}

\newtheorem{theorem}{Theorem}
\theoremstyle{plain}

\newtheorem{corollary}{Corollary}

\newtheorem{definition}{Definition}
\newtheorem{example}{Example}

\newtheorem{lemma}{Lemma}

\newtheorem{remark}{Remark}

\numberwithin{equation}{section}

\begin{document}
\title[Non-archimedean generalized Bessel potentials]{Non-archimedean generalized Bessel potentials and their applications} \subjclass{}
\thanks{* Anselmo Torresblanca-Badillo\\
atorresblanca@uninorte.edu.co} \keywords{Pseudo-differential
operators, convolution kernels, convolution semigroups, Green's
function, heat kernel, $p$-adic numbers, non-archimedean analysis}

\begin{abstract}
This article describes a class of pseudo-differential operators
\begin{equation*}
(\mathcal{A}^{\alpha}\varphi)(x)=\mathcal{F}^{-1}_{\xi \rightarrow
x}\left(\left[\max\{|\boldsymbol{\psi}_{1}(||\xi||_{p})|,|\boldsymbol{\psi}_{2}(||\xi||_{p})|\}\right]^{-\alpha}\widehat{\varphi}(\xi)\right),
\end{equation*}
$\varphi\in \mathcal{D}(\mathbb{Q}_{p}^{n})$ and
$\alpha\in\mathbb{C}$; here
$\left[\max\{|\boldsymbol{\psi}_{1}(||\xi||_{p})|,|\boldsymbol{\psi}_{2}(||\xi||_{p})|\}\right]^{-\alpha}$
is the symbol of the operator $\mathcal{A}^{\alpha}$. These
operators can be seen as a generalization of the Bessel potentials
in the $p$-adic context. We show that the family
$\left(K_{\alpha}\right)_{\alpha>0}$ of convolution kernels attached
to generalized Bessel potentials $\mathcal{A}^{\alpha}$, $\alpha>0$,
determine a convolution semigroup on $\mathbb{Q}_{p}^{n}$. Imposing
certain conditions we have that $K_{\alpha}$, $\alpha>0$, is a
probability measure on $\mathbb{Q}_{p}^{n}$. Moreover, we will study
certain properties corresponding to the Green function of the
operator $\mathcal{A}^{\alpha}$ and we show that heat equations,
naturally associated to these operators, describes the cooling (or
loss of heat) in a given region over time.
\end{abstract}

\author{Anselmo Torresblanca-Badillo}
\email{atorresblanca@uninorte.edu.co}
\address{Universidad del Norte, Departamento de Matem\'aticas y Estad\'istica, Km. 5 V\'ia Puerto Colombia. Barranquilla, Colombia.}
\maketitle

\section{\protect\bigskip Introduction}

In the latest years, there has been a strong interest on the
non-archimedean pseudo-differential operators due its connections
with $p$-adic pseudo-differential equations that describe certain
physical models, see e.g. \cite{Aguilar-Cruz-Estala-2020},
\cite{Albeverio et al}, \cite{Antoniouk-Khrennikov-Kochubei},
\cite{C-Ch-G-2020}, \cite{D-K-K-V}, \cite{K-K-M-2016},
\cite{Kochubei-2001}, \cite{O-K-2017}, \cite{P-K-S-O-C-2019},
\cite{V-V-Z}, \cite{Zu-lib1}, and the references therein.
Particularly, pseudo-differential operators whose symbols are
associated with negative definite functions on the $p$-adic numbers, see \cite{Gu-To-1}, \cite{Gu-To-2}, \cite{To-Z-2}, \cite{To-Z}.\\
In this article, we introduce a new class of non-archimedean
pseudo-differential operators (called the generalized Bessel
potentials) associated with negative definite functions in the
$p$-adic context and in arbitrary dimension. These operators have
the form
\begin{equation*}
(\mathcal{A}^{\alpha}\varphi)(x)=\mathcal{F}^{-1}_{\xi \rightarrow
x}\left(\left[\max\{|\boldsymbol{\psi}_{1}(||\xi||_{p})|,|\boldsymbol{\psi}_{2}(||\xi||_{p})|\}\right]^{-\alpha}\widehat{\varphi}(\xi)\right)
, \ \ \varphi\in \mathcal{D}(\mathbb{Q}_{p}^{n}),
\end{equation*}
where $\alpha\in\mathbb{C}$, $\mathcal{F}^{-1}_{\xi \rightarrow x}$
denotes the inverse Fourier transform, $\widehat{\varphi}$ is the
Fourier transform of $\varphi$, $\mathcal{D}(\mathbb{Q}_{p}^{n})$
denotes the $\mathbb{C}-$ vector space of Bruhat-Schwartz functions
over $\mathbb{Q}_{p}^{n}$ and the function
$\left[\max\{|\boldsymbol{\psi}_{1}(||\xi||_{p})|,|\boldsymbol{\psi}_{2}(||\xi||_{p})|\}\right]^{-\alpha}$
is called symbol of the operator $\mathcal{A}^{\alpha}$. The
functions $\boldsymbol{\psi}_{1},
\boldsymbol{\psi}_{2}:\mathbb{Q}_{p}^{n}\rightarrow \mathbb{C}$ are
continuous, negative definite and radials.\\
Taking, $\boldsymbol{\psi}_{1}=1$ and
$\boldsymbol{\psi}_{2}=||\cdot||_{p}$, then by \cite[Proposition
7.4-$(iii)$]{Berg-Gunnar} and \cite[Example 3.4]{To-Z-2} we have
that $\mathcal{A}^{\alpha}$ correspond to the Bessel potential
studied at \cite{Gu-To-2} and \cite{Taibleson}.\\
We are interested in the pseudo-differential operators
$\mathcal{A}^{\alpha}$ due to our interest for contemporary physical
theories, in particular, study new Cauchy problems (or $p$-adic heat
equations), which governs the temperature distribution in an object
over time. The corresponding equations takes the form
\begin{equation*}
\left\{
\begin{array}{ll}
\frac{\partial u}{\partial t}(x,t)=\mathcal{A}^{\alpha}u(x,t), \ \  t\in[0,\infty), \ \ x\in\mathbb{Q}_{p}^{n}  \\
&  \\
u(x,0)=u_{0}(x)\in \mathcal{D}(\mathbb{Q}_{p}^{n})\text{. } &
\end{array}
\right.
\end{equation*}
It is worth noting that is that unlike the fundamental solutions
studied at \cite{Antoniouk-Khrennikov-Kochubei}, \cite{Gu-To-1},
\cite{Khrennikov-Kochubei}, \cite{Kochubei-2001}, \cite{To-Z-2},
\cite{To-Z}, \cite{Zu-lib1}, et al., in our case, the fundamental
solutions $Z(x,t)$ of the new Cauchy problems satisfies $Z(x,t)\leq
0$ for all $x\in \mathbb{Q}_{p}^{n}\backslash \left\{0\right\}$ and
$t>0$, see Theorem \ref{theorem_Z}. Therefore, we have observed that
the new above $p$-adic heat equations, give temperatures less than
absolute zero, which means from the physical point of view,  that
this equations describe the loss of heat over time.\\
There are many physical and mathematical motivations to employ
$p$-adic analysis in investigation of mathematical and theoretical
aspects of modern quantum physics, such as the properties of the
Green's function over the field of $p$-adic numbers. Green's
functions are a necessary stage in the formulation of the new
$p$-adic quantum theory. For further details the reader may consult
\cite{Bikulov-1991}, \cite{Khre-Kozy-Zu-libro}, \cite{V-V-1989}, and
the references therein.\\
Motivated by the above, in this article we introduce a new class of
$n$-dimensional $p$-adic Green function $G$ attached to
pseudo-differential operator $\mathcal{A}^{\alpha}$ and given by
\begin{equation*}
G(x):=\mathcal{F}^{-1}_{\xi \rightarrow
x}\left(\frac{1}{m^{2}+\left[\max\{|\boldsymbol{\psi}_{1}(||\xi||_{p})|,|\boldsymbol{\psi}_{2}(||\xi||_{p})|\}\right]^{-\alpha}
}\right), \text{  } m>0.
\end{equation*}
On the other hand, since the analysis of convolution semigroups of
probability measures on locally compact abelian groups play an
important role in probability theory,  functional analysis and
potential theory, see e.g. \cite{Berg-Gunnar}. In this article, we
are interested in obtaining a new family of convolution semigroups
of probability measures on $\mathbb{Q}_{p}^{n}$.\\
The article is organized as follows: In Section \ref{Fourier
Analysis}, we will collect some basic results on the $p$-adic
analysis and fix the notation that we will use through the article.
In Section \ref{Preliminary Results}, we will introduce a new class
of non-archimedean pseudo-differential operators on the space
$\mathcal{D}(\mathbb{Q}_{p}^{n})$ which we call generalized Bessel
potentials. In Section \ref{Section_convolution_kernel}, we study
certain properties corresponding to the convolution kernels
$K_{\alpha}$, attached to generalized Bessel potentials, see Theorem
\ref{Theorem_Convolution_Kernel} and Corollary \ref{Corollary_1}.
Moreover, we show that the family
$\left(K_{\alpha}\right)_{\alpha>0}$ determine a convolution
semigroup on $\mathbb{Q}_{p}^{n}$, see Theorem
\ref{convolution_semigroup}. Imposing certain conditions we have
that $K_{\alpha}$, $\alpha>0$, is a probability measure on
$\mathbb{Q}_{p}^{n}$, see Corollary
\ref{Corollary_probability_measure}. In Section \ref{Green and
Heat_Kernel}, we will study certain properties corresponding to the
Green function and the heat Kernel attached to operator
$\mathcal{A}^{\alpha}$, see Theorem \ref{properties_Green} and
Theorem \ref{theorem_Z}, respectively.

\section{\label{Fourier Analysis} Fourier Analysis on $\mathbb{Q}_{p}^{n}$: Essential Ideas}

\subsection{The field of $p$-adic numbers}

Along this article $p$ will denote a prime number. The field of
$p-$adic numbers $\mathbb{Q}_{p}$ is defined as the completion of
the field of rational numbers $\mathbb{Q}$ with respect to the
$p-$adic norm $|\cdot |_{p}$, which is defined as
\begin{equation*}
\left\vert x\right\vert _{p}=\left\{
\begin{array}{lll}
0\text{,} & \text{if} & x=0 \\
&  &  \\
p^{-\gamma }\text{,} & \text{if} & x=p^{\gamma }\frac{a}{b}\text{,}
\end{array}
\right.
\end{equation*}
where $a$ and $b$ are integers coprime with $p$. The integer $\gamma
:=ord(x) $, with $ord(0):=+\infty $, is called the\textit{\
}$p-$\textit{adic order of} $x$.\\
Any $p-$adic number $x\neq 0$ has a unique expansion of the form
\begin{equation*}
x=p^{ord(x)}\sum_{j=0}^{\infty }x_{j}p^{j},
\end{equation*}%
where $x_{j}\in \{0,1,2,\dots ,p-1\}$ and $x_{0}\neq 0$. By using this
expansion, we define \textit{the fractional part of }$x\in \mathbb{Q}_{p}$,
denoted $\{x\}_{p}$, as the rational number
\begin{equation*}
\left\{ x\right\} _{p}=\left\{
\begin{array}{lll}
0\text{,} & \text{if} & x=0\text{ or }ord(x)\geq 0 \\
&  &  \\
p^{ord(x)}\sum_{j=0}^{-ord_{p}(x)-1}x_{j}p^{j}\text{,} & \text{if} &
ord(x)<0.
\end{array}
\right.
\end{equation*}
We extend the $p$-adic norm to $\mathbb{Q}_{p}^{n}$ by taking
\begin{equation*}
||x||_{p}:=\max_{1\leq i\leq n}|x_{i}|_{p},\text{ for
}x=(x_{1},\dots ,x_{n})\in \mathbb{Q}_{p}^{n}.
\end{equation*}
For $r\in \mathbb{Z}$, denote by $B_{r}^{n}(a)=\{x\in
\mathbb{Q}_{p}^{n};||x-a||_{p}\leq p^{r}\}$ \textit{the ball of
radius }$p^{r}$ \textit{with center at} $a=(a_{1},\dots ,a_{n})\in
\mathbb{Q}_p^n$, and take $B_{r}^{n}(0)=:B_{r}^{n}$. Note that $
B_{r}^{n}(a)=B_{r}(a_{1})\times \cdots \times B_{r}(a_{n})$, where $
B_{r}(a_{i}):=\{x\in \mathbb{Q}_{p};|x_{i}-a_{i}|_{p}\leq p^{r}\}$
is the one-dimensional ball of radius $p^{r}$ with center at
$a_{i}\in \mathbb{Q}_{p}$. The ball $B_{0}^{n}$ equals the product
of $n$ copies of $B_{0}=\mathbb{Z}_{p}$, \textit{the ring of
}$p-$\textit{adic integers of }$\mathbb{Q}_{p}$. We also denote by
$S_{r}^{n}(a)=\{x\in \mathbb{Q}_p^n;||x-a||_{p}=p^{r}\}$ \textit{the
sphere of radius }$p^{r}$ \textit{with center at} $a=(a_{1},\dots
,a_{n})\in \mathbb{Q}_p^n$, and take $S_{r}^{n}(0)=:S_{r}^{n}$. The
balls and spheres are both open and
closed subsets in $\mathbb{Q}_p^n$. \\
As a topological space $\left(\mathbb{Q}_p^n,||\cdot ||_{p}\right) $
is totally disconnected, i.e. the only connected subsets of
$\mathbb{Q}_p^n$ are the empty set and the points. A subset of
$\mathbb{Q}_p^n$ is compact if and only if it is closed and bounded
in $\mathbb{Q}_p^n$, see e.g. \cite[Section 1.3]{V-V-Z}, or
\cite[Section 1.8]{Albeverio et al}. The balls and spheres are
compact subsets. Thus $\left(\mathbb{Q} _p^n,||\cdot
||_{p}\right) $ is a locally compact topological space.\\
We will use $\Omega \left( p^{-r}||x-a||_{p}\right) $ to denote the
characteristic function of the ball $B_{r}^{n}(a)$. We will use the
notation $1_{A}$ for the characteristic function of a set $A\subset
\mathbb{Q}_{p}^{n}$. Along the article $ d^{n}x$ will denote a Haar
measure on $\mathbb{Q}_{p}^{n}$ normalized such that
$\int_{\mathbb{Z}_{p}^{n}}d^{n}x=1.$

\subsection{Some function spaces}

A complex-valued function $f$ defined on $\mathbb{Q}_{p}^{n}$ is
called \textit{locally constant} if for any $x\in\mathbb{Q}_{p}^{n}$
there exist an integer $l:=l(x)$ such that
\begin{equation*}
f(x)=f(x^{\prime})\text{ for all } x^{\prime }\in B_{l}^{n}(x).
\end{equation*}
Equivalently, there exists a clopen partition $U $ of
$\mathbb{Q}_{p}^{n}$ such that $f$ is constant on each element of $
U$.\\
Denote by {\LARGE$\varepsilon$}$(\mathbb{Q}_{p}^{n})$ the linear
space of locally constant $\mathbb{C}$-value functions on
$\mathbb{Q}_{p}^{n}$.\\
A function $\varphi :\mathbb{Q}_p^n\rightarrow \mathbb{C}$ is called
a \textit{Bruhat-Schwartz function (or a test function)} if it is
locally constant with compact support.\\
The $ \mathbb{C}$-vector space of Bruhat-Schwartz functions is
denoted by $\mathcal{D}(\mathbb{Q}_{p}^{n})=:\mathcal{D}$. Let
$\mathcal{D}^{\prime }(\mathbb{Q}_{p}^{n})=:\mathcal{D}^{\prime }$
denote the set of all continuous functional (distributions) on
$\mathcal{D}$. The natural pairing $\mathcal{D}^{\prime
}(\mathbb{Q}_{p}^{n})\times
\mathcal{D}(\mathbb{Q}_{p}^{n})\rightarrow \mathbb{C}$ is denoted as
$\left<T,\varphi \right>$ for $T\in \mathcal{D}^{\prime
}(\mathbb{Q}_p^n)$ and $\varphi \in \mathcal{D}(\mathbb{Q}_p^n)$,
see e.g. \cite[Section
4.4]{Albeverio et al}.\\
Denote by $L_{loc}^{1}(\mathbb{Q}_{p}^{n})$ the set of functions
$f:\mathbb{Q}_p^{n}\rightarrow \mathbb{C}$ such that $f\in L^{1}(K)$
for any compact $K\subset \mathbb{Q}_p^{n}$. Every $f\in $
$L_{loc}^{1}(\mathbb{Q}_p^{n})$ defines a distribution $f\in
\mathcal{D}^{\prime }\left(\mathbb{Q}_p^n\right)$ by the formula
\begin{equation*}
\left<f,\varphi \right> =\int_{\mathbb{Q}_p^n}f\left( x\right)
\varphi \left(x\right>d^{n}x.
\end{equation*}
Such distributions are called \textit{regular distributions}.\\
Given $\rho \in \lbrack 0,\infty )$, we denote by $L^{\rho
}\left(\mathbb{Q}_p^n,d^{n}x\right) =L^{\rho
}\left(\mathbb{Q}_p^n\right) :=L^{\rho },$ the
$\mathbb{C}-$vector space of all the complex valued functions $g$
satisfying $\int_{\mathbb{Q}_p^n}\left\vert g\left( x\right)
\right\vert ^{\rho }d^{n}x<\infty $, $ L^{\infty }\allowbreak
:=L^{\infty }\left(\mathbb{Q}_p^n\right) =L^{\infty
}\left(\mathbb{Q}_p^n,d^{n}x\right) $ denotes the $\mathbb{C}
-$vector space of all the complex valued functions $g$ such that the
essential supremum of $|g|$ is bounded.\\
Let denote by $C(\mathbb{Q}_p^n,\mathbb{C})=:C_{\mathbb{C}}$ the
$\mathbb{C}-$vector space of all the complex valued functions which
are continuous, by
$C(\mathbb{Q}_{p}^{n},\mathbb{R})=:C_{\mathbb{R}}$ the
$\mathbb{R}-$vector space of continuous functions. Set
\begin{equation*}
C_{0}(\mathbb{Q}_p^n,\mathbb{C}):=C_{0}(\mathbb{Q}_p^n)=\left\{
f:\mathbb{Q}_p^n\rightarrow \mathbb{C} ;\text{ }f\text{ is
continuous and }\lim_{||x||_{p}\rightarrow \infty }f(x)=0\right\} ,
\end{equation*}
where $\lim_{||x||_{p}\rightarrow \infty }f(x)=0$ means that for
every $\epsilon >0$ there exists a compact subset $B(\epsilon )$
such that $|f(x)|<\epsilon $ for $x\in \mathbb{Q}_p^n\backslash
B(\epsilon ).$ We recall that
$(C_{0}(\mathbb{Q}_p^n,\mathbb{C}),||\cdot ||_{L^{\infty }})$ is
a Banach space.

\subsection{Fourier transform}

Set $\chi_{p}(y)=\exp (2\pi i\{y\}_{p})$ for $y\in \mathbb{Q}_{p}$.
The map $\chi _{p}(\cdot )$ is an additive character on
$\mathbb{Q}_{p}$, i.e. a continuous map from
$\left(\mathbb{Q}_{p},+\right) $ into $S$ (the unit circle
considered as multiplicative group) satisfying $\chi
_{p}(x_{0}+x_{1})=\chi _{p}(x_{0})\chi _{p}(x_{1})$,
$x_{0},x_{1}\in\mathbb{Q}_{p}$. The additive characters of
$\mathbb{Q}_{p}$ form an Abelian group which is isomorphic to
$\left(\mathbb{Q}_{p},+\right) $, the isomorphism is given by $\xi
\mapsto \chi _{p}(\xi x)$, see e.g. \cite[Section 2.3]{Albeverio et
al}. \\
Given $x=(x_{1},\dots ,x_{n}),$ $\xi =(\xi _{1},\dots ,\xi
_{n})\in \mathbb{Q}_p^n$, we set $x\cdot \xi
:=\sum_{j=1}^{n}x_{j}\xi _{j}$. If $f\in L^{1}(\mathbb{Q}_p^n)$,
its Fourier transform is defined by
\begin{equation*}
(\mathcal{F} f)(\xi)=\mathcal{F}_{x\rightarrow \xi
}(f)=\widehat{f}(\xi):=\int_{\mathbb{Q}_p^n}\chi_{p}(\xi \cdot
x)f(x)d^{n}x,\quad \text{for }\xi \in \mathbb{Q}_p^n.
\end{equation*}
The inverse Fourier transform of a function $f\in
L^{1}(\mathbb{Q}_p^n)$ is
\begin{equation*}
(\mathcal{F}^{-1}f)(x)=\mathcal{F}^{-1}_{\xi\rightarrow x
}(f)=\int_{\mathbb{Q}_p^n}\chi _{p}(-x \cdot
\xi)f(\xi)d^{n}\xi,\quad \text{for }x \in \mathbb{Q}_p^n.
\end{equation*}
The Fourier transform is a linear isomorphism from
$\mathcal{D}(\mathbb{Q}_{p}^{n})$ onto itself satisfying
\begin{equation*}
(\mathcal{F}(\mathcal{F}f))(\xi )=f(-\xi ),  \label{FF(f)}
\end{equation*}
for every $f\in \mathcal{D}(\mathbb{Q}_{p}^{n})$, see e.g.
\cite[Section 4.8]{Albeverio et al}.\\
The Fourier transform $\mathcal{F}(f)=\mathcal{F}_{x\rightarrow \xi
}(f)=\widehat{f}$ of a distribution $f$ is defined by the relation
\begin{equation*}
\left<\mathcal{F}(f),\varphi\right>=\left<f,\mathcal{F}(\varphi)\right>,
\text{ for all } \varphi\in \mathcal{D}(\mathbb{Q}_{p}^{n}).
\end{equation*}
The Fourier transform $f\rightarrow \mathcal{F}(f)$ is a linear
isomorphism from $\mathcal{D}'(\mathbb{Q}_{p}^{n})$ onto
$\mathcal{D}'(\mathbb{Q}_{p}^{n})$, see e.g. \cite[Section
4.9]{Albeverio et al}.

\section{\label{Preliminary Results} Generalized
Bessel potentials}

The goal of this section is to introduce a large class of
non-archimedean pseudo-differential operators on
$\mathcal{D}(\mathbb{Q}_{p}^{n})$ which we call generalized Bessel
potentials.

\begin{definition} \label{neg_def}
A function $\psi:\mathbb{\mathbb{Q}}_{p}^{n}\rightarrow \mathbb{C}$
is called negative definite, if
\begin{equation*}
\sum\nolimits_{i,j=1}^{m}\left(\psi(x_{i})+\overline{\psi(x_{j})}
-\psi(x_{i}-x_{j})\right) \lambda _{i}\overline{\lambda _{j}}\geq 0
\label{negative definite}
\end{equation*}
for all $m\in \mathbb{N}\backslash \{0\},$ $x_{1},\ldots ,x_{m}\in $
$ \mathbb{\mathbb{Q}}_{p}^{n},$ $\lambda _{1},\ldots ,\lambda _{m}$
$\in $ $\mathbb{C}$.
\end{definition}

\begin{example}\label{example neg def}
\begin{enumerate}
\item[(i)] \cite[Example 3.5]{To-Z-2} For $\alpha$ and $\beta$ positive real numbers, we have that $\boldsymbol{\psi}(||\xi||_{p})=\alpha \|\xi\|_{p}^{\beta}$ is a negative
definite function, where $\|\cdot\|_{p}$ denotes the $p$-adic norm
on $\mathbb{Q}_{p}^{n}$. Note that $\boldsymbol{\psi}$ is increasing
function with respect to $\|\cdot\|_{p}$.
\item[(ii)] Let $f(\xi)\in \mathcal{\mathbf{\mathbb{Q}}}_{p}^{n}[\xi _{1},\ldots
,\xi _{n}]$ be a non-constant polynomial. We say that $f(\xi )$ is
an elliptic polynomial of degree $d$, if it satisfies: $(a)$
$f(\xi)$ is a homogeneous polynomial of degree $d$, and $(b)$
$f(\xi)=0\Leftrightarrow \xi =0$. For any $\beta>0$ we have that
$|f(\xi)|_{p}^{\beta}$ is a negative definite function, where
$|\cdot|_{p}$ denotes the $p$-adic norm on $\mathbb{Q}_{p}$, see
\cite[Theorem 3]{Gu-To-1}. On the other hand, by \cite[Lemma
25]{Zu-lib1}, there exist positive constants $C_{0}=C_{0}(f),$ $
C_{1}=C_{1}(f)$ such that
\begin{equation*}
C_{0}||\xi ||_{p}^{d}\leq |f(\xi )|_{p}\leq C_{1}||\xi
||_{p}^{d},\text{ for every }\xi \in
\mathcal{\mathbf{\mathbb{Q}}}_{p}^{n}.
\end{equation*}
\item[(iii)] \cite[Lemma 3.8]{To-Z-2} Set $\psi
_{0}\left(  \xi\right)
:=\sum_{j=1}^{\infty}c_{j}||\xi||_{p}^{\alpha_{j}}$ with
$c_{j}\geq0$, $\alpha_{j}\in\mathbb{N}$ such that the real series
$\sum_{j=1}^{\infty}c_{j}y^{\alpha_{j}}$ defines a non-constant real
function. Then for any $j\in \mathbb{N}\backslash\{0\}$,
\begin{equation*}
\boldsymbol{\psi}(||\xi||_{p}):=e^{e^{.^{.^{.e^{\psi_{0}\left(
\xi\right)  }}}}}\text{, }j-\text{powers }
\end{equation*}
is a continuous and negative definite function on
$\mathbb{Q}_{p}^{n}$. In this case there is a fixed positive
constant $\beta:=\beta(\boldsymbol{\psi})$ such that for any $\xi
\in \mathbb{Q}_{p}^{n}$ we have that
$\boldsymbol{\psi}(||\xi||_{p})>||\xi||_{p}^{\beta}$. Moreover, note
that $\boldsymbol{\psi}$ is increasing function with respect to
$\|\cdot\|_{p}$.
\item[(iv)] \cite[Remark 3-$(ii)$]{To-Z} We set $\mathbb{R}_{+}:=\{x\in\mathbb{R}:x\geq0\}$.
Let $J:$\textbf{\ }$\mathbb{Q}_{p}^{n}\rightarrow\mathbb{R}_{+}$ be
a radial (i.e. $J(x)=J(||x||_{p})$) and continuous function such
that $\int_{\mathbb{Q}_{p}^{n}}J(||x||_{p})d^{n}x=1$. Then, the
function $\widehat{J}(0)-\widehat{J}(\left\Vert \xi \right\Vert
_{p})=1-\widehat{J}(\left\Vert \xi \right\Vert _{p})$ is negative
definite. Moreover, $0\leq 1-\widehat{J}(\left\Vert \xi \right\Vert
_{p})\leq 2$ for all $\xi \in \mathbb{Q}_{p}^{n}$.
\item[(v)] \cite[Proposition 7.4-($iii$)]{Berg-Gunnar} The
non-negative constant functions are negative definite.
\end{enumerate}
\end{example}

\begin{lemma} \label{radial-locally constant}
Let $f:\mathbb{Q}_{p}^{n}\rightarrow \mathbb{C}$ be a radial
function, i.e. $f(x)=f(||x||_{p})$ for all $x\in
\mathbb{Q}_{p}^{n}$. Then $f$ is a locally constant function.
\end{lemma}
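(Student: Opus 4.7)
The plan is to reduce the local constancy of $f$ to the local constancy of the $p$-adic norm, exploiting the hypothesis that $f$ depends only on $\|\cdot\|_p$. The key analytical input is the strong (ultrametric) triangle inequality on $\mathbb{Q}_p^n$: $\|a+b\|_p = \max\{\|a\|_p,\|b\|_p\}$ whenever $\|a\|_p \neq \|b\|_p$. This forces the norm itself to be locally constant at every nonzero point, which is precisely what is needed.

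Concretely, I would fix $x \in \mathbb{Q}_p^n$ with $x \neq 0$ and write $\|x\|_p = p^r$ for some $r \in \mathbb{Z}$. I propose to take $l(x) := r-1$ and verify that $f$ is constant on the ball $B_{r-1}^n(x)$. For any $x' \in B_{r-1}^n(x)$ one has $\|x'-x\|_p \leq p^{r-1} < p^r = \|x\|_p$, and applying the ultrametric inequality to the decomposition $x' = x + (x'-x)$ yields $\|x'\|_p = \|x\|_p$. The radial hypothesis then immediately gives $f(x') = f(\|x'\|_p) = f(\|x\|_p) = f(x)$, as needed.

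The case $x = 0$ is handled by the analogous, slightly more delicate choice of $l(0)$: since every ball $B_l^n(0)$ contains points of norm $p^k$ for each $k \leq l$, one picks $l(0)$ so small that $f$ takes the value $f(0)$ throughout $B_{l(0)}^n$. This is the only place where one might need any additional information about $f$ beyond radiality; the essential content of the lemma is the ultrametric-inequality step at nonzero points, and I anticipate no serious obstacle besides this minor bookkeeping at the origin.
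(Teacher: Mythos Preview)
Your argument for $x \neq 0$ is correct and is essentially a self-contained version of the paper's proof: the paper invokes the sphere decomposition $\mathbb{Q}_p^n\setminus\{0\} = \bigsqcup_{\gamma\in\mathbb{Z}} S_\gamma^n$ together with the fact (cited from Albeverio et al.) that each sphere is open, whereas you establish this openness directly by showing via the ultrametric inequality that $B_{r-1}^n(x) \subset S_r^n$ whenever $\|x\|_p = p^r$. The content is the same; your version simply avoids the external references.

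You are right to flag the point $x=0$, and your instinct that it may require ``additional information about $f$ beyond radiality'' is exactly correct: the lemma as stated is in fact false at the origin. Take for instance $f = 1_{\{0\}}$ (or $f(x)=\|x\|_p$); this is radial, but every ball $B_l^n(0)$ contains nonzero points, so no choice of $l(0)$ makes $f$ constant on $B_{l(0)}^n$. The paper's own proof shares this gap --- its sphere decomposition silently omits the origin --- so you cannot close it without an extra hypothesis such as continuity at $0$. In the paper's applications the radial functions under consideration are already assumed continuous (Hypothesis~A), which repairs the issue in practice, but as a standalone statement the lemma overreaches and your hesitation at $x=0$ is warranted rather than ``minor bookkeeping.''
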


\begin{proof}
By \cite[Sections 1.8 and 1.10]{Albeverio et al} we have that
\begin{equation*}
\mathbb{Q}_{p}^{n}=\bigsqcup_{\gamma \in \mathbb{Z}}S_{\gamma}^{n}.
\end{equation*}
Let $x\in \mathbb{Q}_{p}^{n}$ fixed. Then, there exists a unique
$\gamma:=\gamma(x) \in \mathbb{Z}$ such
that $x\in S_{\gamma}^{n}$.\\
Since $f$ is a constant function on $S_{\gamma}^{n}$, then by
\cite[Proposition 1.8.8 and Section 1.10]{Albeverio et al} and
\cite[Theorem 1.8.1-$(1)$]{Albeverio et al} we obtain the results
desired.
\end{proof}

\begin{remark}\label{Obs}
\begin{enumerate}
\item [(i)] If $f:\mathbb{Q}_{p}^{n}\rightarrow \mathbb{C}$ is a radial
function, then, as a consequence of the previous lemma and
\cite[Chapter VI-Section 1]{V-V-Z} we have that $f$ is a continuous
function on $\mathbb{Q}_{p}^{n}$.
\item [(ii)] If $f:\mathbb{Q}_{p}^{n}\rightarrow \mathbb{C}$ is a radial
function, then, it is clear that $|f|$ is a locally constant
function on $\mathbb{Q}_{p}^{n}$.
\end{enumerate}
\end{remark}

\begin{definition}\label{Hypothesis_A}[{\bf Hypothesis A}]
\label{Hypothesis A} Let $\boldsymbol{\psi}_{1},
\boldsymbol{\psi}_{2}:\mathbb{Q}_{p}^{n}\rightarrow \mathbb{C}$ be
functions. We say that $\boldsymbol{\psi}_{1}$ and
$\boldsymbol{\psi}_{2}$ satisfies the \textit{Hypothesis A} if the
following properties are met:
\begin{enumerate}
\item[(i)] $\boldsymbol{\psi}_{1}$ and
$\boldsymbol{\psi}_{2}$ are negative definite and radial
(consequently, continuous) functions with
$\boldsymbol{\psi}_{1}(||\xi||_{p})\neq 0$ for all $\xi \in
\mathbb{Q}_{p}^{n}$.
\item[(ii)] There is a ball $B_{r}^{n}$,
$r:=r(\boldsymbol{\psi}_{1},\boldsymbol{\psi}_{2})\in \mathbb{Z}$,
such that
\begin{equation*}
|\boldsymbol{\psi}_{1}(||\xi||_{p})|\geq
|\boldsymbol{\psi}_{2}(||\xi||_{p})| \text{ if and only if } \xi \in
B_{r}^{n}.
\end{equation*}
\end{enumerate}
\end{definition}

The condition $(ii)$ in the Definition \ref{Hypothesis_A} is
motivated by \cite[Chapter 1-Section I-$3$]{V-V-Z}. Denote by
$\mathbb{N}:=\left\{1,2,\ldots \right\}$ the set of natural numbers
and let $\mathbb{R}_{+}:=\left\{x\in \mathbb{R}:x\geq 0\right\}$.
Throughout this paper we will assume that $\boldsymbol{\psi}_{1}$
and $\boldsymbol{\psi}_{2}$ are functions satisfying the Hypothesis
A.

\begin{definition}\label{Def_Bessel_potential}
If $f\in \mathcal{D}'(\mathbb{Q}_{p}^{n})$, $\alpha \in \mathbb{C}$,
we define the non-archimedean generalized Bessel potential of order
$\alpha$ of $f$ by
\begin{equation*}
\mathcal{F}\left(\mathcal{A}^{\alpha}f\right)=\left[\max\{|\boldsymbol{\psi}_{1}(||\xi||_{p})|,|\boldsymbol{\psi}_{2}(||\xi||_{p})|\}\right]^{-\alpha}\widehat{f}.
\end{equation*}
\end{definition}

Let $\alpha\in\mathbb{C}$ fixed. For $\varphi \in
\mathcal{D}(\mathbb{Q}_{p}^{n})$ we define
\begin{eqnarray*}
(\mathcal{A}^{\alpha}\varphi)(x)&=&\mathcal{F}^{-1}_{\xi \rightarrow
x}\left(\left[\max\{|\boldsymbol{\psi}_{1}(||\xi||_{p})|,|\boldsymbol{\psi}_{2}(||\xi||_{p})|\}\right]^{-\alpha}\widehat{\varphi}(\xi)\right) \\
&=&\int_{\mathbf{\mathbb{Q}}_{p}^{n}}\chi_{p}(-x\cdot\xi)\left[\max\{|\boldsymbol{\psi}_{1}(||\xi||_{p})|,|\boldsymbol{\psi}_{2}(||\xi||_{p})|\}\right]^{-\alpha}\widehat{\varphi}(\xi)
d^{n}\xi, \text{ } x\in \mathbb{Q}_{p}^{n}.
\end{eqnarray*}

\begin{lemma} \label{operator_def} The application
\[
\begin{array}
[c]{cccc} \mathcal{A}^{\alpha}: & \mathcal{D}(\mathbb{Q}_{p}^{n})  &
\rightarrow & \mathcal{D}(\mathbb{Q}_{p}^{n}) \\
&  &  & \\
& \alpha & \longrightarrow & \mathcal{A}^{\alpha}\varphi
\end{array}
\]
corresponds to a well-defined $p$-adic pseudo-differential operator
where your symbol
$\left[\max\{|\boldsymbol{\psi}_{1}(||\xi||_{p})|,|\boldsymbol{\psi}_{2}(||\xi||_{p})|\}\right]^{-\alpha}\in$
{\LARGE$\varepsilon$}$(\mathbb{Q}_{p}^{n})$.
\end{lemma}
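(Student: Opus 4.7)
The plan is to verify two separate claims: first, that the symbol $s(\xi):=\left[\max\{|\boldsymbol{\psi}_{1}(\|\xi\|_{p})|,|\boldsymbol{\psi}_{2}(\|\xi\|_{p})|\}\right]^{-\alpha}$ lies in {\LARGE$\varepsilon$}$(\mathbb{Q}_{p}^{n})$, and second, that $\mathcal{A}^{\alpha}$ sends $\mathcal{D}(\mathbb{Q}_{p}^{n})$ into itself. Both reductions rely on the same basic mechanism, namely that radial functions are locally constant.

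For the first claim, I would argue as follows. By Hypothesis A-$(i)$, $\boldsymbol{\psi}_{1}$ and $\boldsymbol{\psi}_{2}$ are radial, so by Lemma \ref{radial-locally constant} and Remark \ref{Obs}-$(ii)$ the functions $|\boldsymbol{\psi}_{1}(\|\cdot\|_{p})|$ and $|\boldsymbol{\psi}_{2}(\|\cdot\|_{p})|$ are locally constant on $\mathbb{Q}_{p}^{n}$. The pointwise maximum of two locally constant functions is again locally constant (at any point take the intersection of the two balls on which each is constant). Hypothesis A-$(i)$ also guarantees $|\boldsymbol{\psi}_{1}(\|\xi\|_{p})|>0$ everywhere, hence the maximum is strictly positive and real-valued, so the complex power $r\mapsto r^{-\alpha}=\exp(-\alpha\log r)$ is a well-defined continuous map on $(0,\infty)$. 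Composing a locally constant function with this continuous map still yields a locally constant function, which shows $s\in${\LARGE$\varepsilon$}$(\mathbb{Q}_{p}^{n})$.

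For the second claim, the key input is that the Fourier transform is a linear isomorphism of $\mathcal{D}(\mathbb{Q}_{p}^{n})$ onto itself, so $\widehat{\varphi}\in\mathcal{D}(\mathbb{Q}_{p}^{n})$ whenever $\varphi\in\mathcal{D}(\mathbb{Q}_{p}^{n})$. Consequently $\widehat{\varphi}$ is compactly supported and locally constant. Multiplying by the locally constant function $s$ preserves both properties: the support of the product is contained in $\mathrm{supp}(\widehat{\varphi})$, and at any point a common ball of local constancy of $s$ and $\widehat{\varphi}$ works for the product. Therefore $s\cdot\widehat{\varphi}\in\mathcal{D}(\mathbb{Q}_{p}^{n})$, and applying the inverse Fourier transform, which is also an automorphism of $\mathcal{D}(\mathbb{Q}_{p}^{n})$, yields $\mathcal{A}^{\alpha}\varphi\in\mathcal{D}(\mathbb{Q}_{p}^{n})$.

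The main conceptual point that needs care is the existence and local constancy of $s$: one must observe that Hypothesis A-$(i)$ rules out zeros of $|\boldsymbol{\psi}_{1}|$, hence of the maximum, so no singularity arises from the negative exponent $\alpha$. Everything else is routine once local constancy of radial functions and stability of this property under $\max$, products and composition with continuous maps into $\mathbb{C}$ is used. I do not anticipate any serious obstacle; the proof is essentially a bookkeeping exercise applying Lemma \ref{radial-locally constant} and Remark \ref{Obs} to the ingredients of the symbol.
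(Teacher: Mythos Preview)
Your proposal is correct and reaches the same conclusion as the paper, but the route for the first claim is slightly different. The paper exploits Hypothesis~A--$(ii)$ explicitly to write
\[
\left[\max\{|\boldsymbol{\psi}_{1}(\|\xi\|_{p})|,|\boldsymbol{\psi}_{2}(\|\xi\|_{p})|\}\right]^{-\alpha}
=|\boldsymbol{\psi}_{1}(\|\xi\|_{p})|^{-\alpha}\,1_{B_{r}^{n}}+|\boldsymbol{\psi}_{2}(\|\xi\|_{p})|^{-\alpha}\,1_{\mathbb{Q}_{p}^{n}\setminus B_{r}^{n}},
\]
and then uses linearity of {\LARGE$\varepsilon$}$(\mathbb{Q}_{p}^{n})$ together with Lemma~\ref{radial-locally constant} and Remark~\ref{Obs}--$(ii)$ to conclude local constancy of each summand. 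You instead bypass Hypothesis~A--$(ii)$ entirely, arguing directly that a pointwise maximum of locally constant functions is locally constant and that postcomposition with $r\mapsto r^{-\alpha}$ preserves this property. Your argument is a bit more elementary and general (it would work for any pair of nonvanishing radial symbols, not just those satisfying the ball condition of Hypothesis~A--$(ii)$), while the paper's decomposition makes the structure imposed by Hypothesis~A more visible and is reused later in the article. For the second claim both proofs coincide: multiply $\widehat{\varphi}\in\mathcal{D}$ by the locally constant symbol and apply the inverse Fourier transform.
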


\begin{proof}
Note that
\begin{equation*}
\left[\max\{|\boldsymbol{\psi}_{1}(||\xi||_{p})|,|\boldsymbol{\psi}_{2}(||\xi||_{p})|\}\right]^{-\alpha}=|\boldsymbol{\psi}_{1}(||\xi||_{p})|^{-\alpha}1_{B_{r}^{n}}+|\boldsymbol{\psi}_{2}(||\xi||_{p})|^{-\alpha}1_{\mathbb{Q}_{p}^{n}\backslash
B_{r}^{n}}.
\end{equation*}
Given the fact that $1_{B_{r}^{n}}$ and
$1_{\mathbb{Q}_{p}^{n}\backslash B_{r}^{n}}$ assume the values $0$
and $1$ only, and in addition, as the set
{\LARGE$\varepsilon$}$(\mathbb{Q}_{p}^{n})$ is linear over the field
$\mathbb{C}$, see e.g. \cite[Chapter VI-Section 1]{V-V-Z}, then by
Lemma \ref{radial-locally constant}, Remark \ref{Obs}-$(ii)$ and
\cite[Example 5-p. 80]{V-V-Z} we have that the function
$\left[\max\{|\boldsymbol{\psi}_{1}(||\xi||_{p})|,|\boldsymbol{\psi}_{2}(||\xi||_{p})|\}\right]^{-\alpha}$
is locally constant on $\mathbb{Q}_{p}^{n}$.\\
Therefore, if $\varphi \in \mathcal{D}(\mathbb{Q}_{p}^{n})$ then by
\cite[Theorem 4.8.2]{V-V-Z} we have that
$\mathcal{A}^{\alpha}\varphi\in \mathcal{D}(\mathbb{Q}_{p}^{n})$.
\end{proof}
The function $\mathcal{A}^{\alpha}\varphi$, $\alpha \in \mathbb{C}$,
is called the generalized Bessel potential of order $\alpha$ of
$\varphi$.

\section{\label{Section_convolution_kernel} Convolution kernels attached to generalized
Bessel potentials and its applications}

In this section we study certain properties and applications
corresponding to the convolution kernels attached to generalized
Bessel potentials. From now on, $\alpha$ is a real number such that
$\alpha\in \mathbb{R}_{+}\backslash \left\{0 \right\}$.\\
We begin with the following definition.

\begin{definition}\label{convolution_kernel}
We define the convolution kernel $K_{\alpha}$ of the generalized
Bessel potential $\mathcal{A}^{\alpha}$ by
\begin{equation}\label{Definition_K_a}
K_{\alpha}(x):=\int_{\mathbf{\mathbb{Q}}_{p}^{n}}\chi_{p}(-x\cdot\xi)\left[\max\{|\boldsymbol{\psi}_{1}(||\xi||_{p})|,|\boldsymbol{\psi}_{2}(||\xi||_{p})|\}\right]^{-\alpha}d^{n}\xi,
\text{  } x \in \mathbb{Q}_{p}^{n}.
\end{equation}
\end{definition}

\begin{remark}\label{Rem_convolution_kernel}
Note that the symbol
$\left[\max\{|\boldsymbol{\psi}_{1}(||\xi||_{p})|,|\boldsymbol{\psi}_{2}(||\xi||_{p})|\}\right]^{-\alpha}$
of the pseudo-differential operator $\mathcal{A}^{\alpha}$ defines a
regular distribution on $\mathbb{Q}_{p}^{n}$. In this case, by
\cite[Proposition 4.9.1]{Albeverio et al} we have that
$K_{\alpha}\in \mathcal{D}'(\mathbb{Q}_{p}^{n})$.
\end{remark}

\begin{theorem}\label{Theorem_Convolution_Kernel}
The convolution kernel $K_{\alpha}$ satisfies the following
conditions:
\begin{enumerate}
\item[(i)] $K_{\alpha}\ast \varphi=\mathcal{A}^{\alpha} \varphi$, for all
$\varphi\in \mathcal{D}(\mathbb{Q}_{p}^{n})$.
\item[(ii)] For $x\in \mathbb{Q}_{p}^{n}\backslash \left\{0\right\}$, we have
that
\begin{equation*}
\begin{split}
K_{\alpha}(x)&=||x||_{p}^{-n}\left\{(1-p^{-n})\sum_{j=0}^{\infty}\left(\Bigg.\left[\max\{|\boldsymbol{\psi}_{1}(||x||_{p}^{-1}p^{-j})|,|\boldsymbol{\psi}_{2}(||
x||_{p}^{-1}p^{-j})|\}\right]^{-\alpha}\right.\right.\\
&\quad
\Bigg.-\left[\max\{|\boldsymbol{\psi}_{1}(||x||_{p}^{-1}p)|,|\boldsymbol{\psi}_{2}(||x||_{p}^{-1}p)|\}\right]^{-\alpha}\Bigg)
p^{-nj} \Bigg\}.
\end{split}
\end{equation*}
\end{enumerate}
\end{theorem}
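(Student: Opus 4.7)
I will treat the two parts separately.

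\emph{Part (i), kernel identity via Fourier duality.} By Lemma \ref{radial-locally constant} and Remark \ref{Obs}-(ii), the symbol $s(\xi):=\left[\max\{|\boldsymbol{\psi}_{1}(\|\xi\|_{p})|,|\boldsymbol{\psi}_{2}(\|\xi\|_{p})|\}\right]^{-\alpha}$ is locally constant and hence defines a regular distribution on $\mathbb{Q}_{p}^{n}$. Definition \ref{convolution_kernel} together with the distributional Fourier transform identifies $K_{\alpha}=\mathcal{F}^{-1}(s)$, so $\mathcal{F}(K_{\alpha})=s$ in $\mathcal{D}'(\mathbb{Q}_{p}^{n})$. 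For any test function $\varphi\in\mathcal{D}(\mathbb{Q}_{p}^{n})$, I would apply the convolution-to-product exchange $\mathcal{F}(K_{\alpha}\ast\varphi)=\mathcal{F}(K_{\alpha})\,\widehat{\varphi}=s\,\widehat{\varphi}$, which is well-posed since $\widehat{\varphi}\in\mathcal{D}$ and multiplication of a distribution by a test function is defined. Applying $\mathcal{F}^{-1}$ yields $K_{\alpha}\ast\varphi=\mathcal{F}^{-1}(s\,\widehat{\varphi})=\mathcal{A}^{\alpha}\varphi$, which is exactly the claimed identity.

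\emph{Part (ii), explicit formula via sphere decomposition.} The plan is to realize \eqref{Definition_K_a} as the limit of truncations on balls $B_{N}^{n}$ and then evaluate each truncation shell by shell using $\mathbb{Q}_{p}^{n}=\bigsqcup_{j\in\mathbb{Z}}S_{j}^{n}$. Because $s$ is radial, it takes the constant value $s_{j}:=\left[\max\{|\boldsymbol{\psi}_{1}(p^{j})|,|\boldsymbol{\psi}_{2}(p^{j})|\}\right]^{-\alpha}$ on $S_{j}^{n}$, so the truncated integral is $\sum_{j\leq N} s_{j}\int_{S_{j}^{n}}\chi_{p}(-x\cdot\xi)\,d^{n}\xi$. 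Fixing $x\neq 0$ with $\|x\|_{p}=p^{k}$, the standard identity $\int_{B_{j}^{n}}\chi_{p}(-x\cdot\xi)\,d^{n}\xi=p^{nj}\Omega(p^{j}\|x\|_{p})$ combined with $S_{j}^{n}=B_{j}^{n}\setminus B_{j-1}^{n}$ gives
\[
\int_{S_{j}^{n}}\chi_{p}(-x\cdot\xi)\,d^{n}\xi=\begin{cases} p^{nj}(1-p^{-n}) & j\leq -k, \\ -p^{-nk} & j=-k+1, \\ 0 & j\geq -k+2.\end{cases}
\]
All shells with $j\geq -k+2$ vanish, so the limit $N\to\infty$ reduces to a genuine series indexed by $j\leq -k$ plus the isolated boundary contribution at $j=-k+1$. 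Reindexing $j\mapsto -k-j$ and using $p^{-k}=\|x\|_{p}^{-1}$, $p^{-nk}=\|x\|_{p}^{-n}$ converts the tail into $\|x\|_{p}^{-n}(1-p^{-n})\sum_{j\geq 0}s_{-k-j}\,p^{-nj}$; absorbing the isolated $-p^{-nk}s_{-k+1}$ by rewriting it as $-(1-p^{-n})s_{-k+1}\sum_{j\geq 0}p^{-nj}\cdot\|x\|_{p}^{-n}$ merges everything into the single bracketed expression displayed in the theorem, with the arguments $\|x\|_{p}^{-1}p^{-j}$ and $\|x\|_{p}^{-1}p$ appearing in the two parts of the summand.

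\emph{Main obstacle.} The only genuine delicacy is controlling the tail $\sum_{j\geq 0}s_{-k-j}p^{-nj}$, i.e.\ the behavior of $s$ near $\xi=0$. Hypothesis A-(i) gives $\boldsymbol{\psi}_{1}(\|\xi\|_{p})\neq 0$ everywhere, and continuity of $|\boldsymbol{\psi}_{1}|$ on the (compact) closed ball in question forces $|\boldsymbol{\psi}_{1}|$ to be bounded below there, so $s$ is bounded near $0$; combined with the geometric factor $p^{-nj}$ this yields absolute convergence. Beyond this, the remaining work is purely bookkeeping: grouping the isolated boundary shell at $j=-k+1$ with the tail so that the difference $s_{-k-j}-s_{-k+1}$ emerges inside a single $(1-p^{-n})$-weighted geometric-type sum, matching the exact form of the statement.
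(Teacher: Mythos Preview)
Your proposal is correct and matches the paper's approach: part~(i) is the Fourier convolution-to-product exchange (the paper carries it out by a direct Fubini computation rather than invoking the distributional identity, but the content is identical), and part~(ii) is the same shell-by-shell evaluation of the radial Fourier integral using the standard values of $\int_{S_{j}^{n}}\chi_{p}(-x\cdot\xi)\,d^{n}\xi$, with your direct computation replacing the paper's preliminary normalization $w=p^{\gamma}\xi$. Your explicit justification of the tail convergence $\sum_{j\geq 0}s_{-k-j}p^{-nj}$ via Hypothesis~A-(i) and compactness is a detail the paper leaves implicit.
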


\begin{proof}
\begin{enumerate}
\item[(i)] By using Fubini's Theorem, for $\varphi\in \mathcal{D}(\mathbb{Q}_{p}^{n})$ and $x\in
\mathbb{Q}_{p}^{n}$ we have that
\begin{eqnarray*}
(\mathcal{A}^{\alpha}\varphi)(x)&=&\int_{\mathbf{\mathbb{Q}}_{p}^{n}}\int_{\mathbf{\mathbb{Q}}_{p}^{n}}\chi_{p}(y-x\cdot\xi)\left[\max\{|\boldsymbol{\psi}_{1}(||\xi||_{p})|,|\boldsymbol{\psi}_{2}(||\xi||_{p})|\}\right]^{-\alpha}\varphi(y)d^{n}yd^{n}\xi\\
&=&\int_{\mathbf{\mathbb{Q}}_{p}^{n}}\int_{\mathbf{\mathbb{Q}}_{p}^{n}}\overline{\chi_{p}(x-y\cdot\xi)}\left[\max\{|\boldsymbol{\psi}_{1}(||\xi||_{p})|,|\boldsymbol{\psi}_{2}(||\xi||_{p})|\}\right]^{-\alpha}\varphi(y)d^{n}yd^{n}\xi\\
&=&\int_{\mathbf{\mathbb{Q}}_{p}^{n}}\int_{\mathbf{\mathbb{Q}}_{p}^{n}}\overline{\chi_{p}(z\cdot\xi)}\left[\max\{|\boldsymbol{\psi}_{1}(||\xi||_{p})|,|\boldsymbol{\psi}_{2}(||\xi||_{p})|\}\right]^{-\alpha}\varphi(x-z)d^{n}zd^{n}\xi\\
&=&\int_{\mathbf{\mathbb{Q}}_{p}^{n}}\int_{\mathbf{\mathbb{Q}}_{p}^{n}}\chi_{p}(-z\cdot\xi)\left[\max\{|\boldsymbol{\psi}_{1}(||\xi||_{p})|,|\boldsymbol{\psi}_{2}(||\xi||_{p})|\}\right]^{-\alpha}d^{n}\xi\varphi(x-z)d^{n}z\\
&=&\int_{\mathbf{\mathbb{Q}}_{p}^{n}}K_{\alpha}(z)\varphi(x-z)d^{n}z\\
&=&(K_{\alpha}\ast\varphi)(x).
\end{eqnarray*}
\item[(ii)] Let $x=p^{\gamma}x_{0}\neq 0$ with $\gamma \in \mathbb{Z}$ and
$||x_{0}||_{p}=1$. Using the changes of variables $w=p^{\gamma}\xi$
and $z=p^{j}w$, respectively, we have that
\begin{eqnarray*}
K_{\alpha}(x)&=&\int_{\mathbf{\mathbb{Q}}_{p}^{n}}\chi_{p}(-p^{\gamma}\xi\cdot
x_{0})\left[\max\{|\boldsymbol{\psi}_{1}(||\xi||_{p})|,|\boldsymbol{\psi}_{2}(||\xi||_{p})|\}\right]^{-\alpha}d^{n}\xi\\
&=&||x||_{p}^{-n}\hspace{-0.4cm}{\sum_{-\infty<j<\infty}}\hspace{-0.3cm}{\left[\max\{|\boldsymbol{\psi}_{1}(p^{\gamma+j})|,|\boldsymbol{\psi}_{2}(p^{\gamma+j})|\}\right]^{-\alpha}}\hspace{-0.4cm}{\int\limits_{||w||_{p}=p^{j}}\hspace{-0.3cm}{\chi_{p}(-w\cdot
x_{0})}d^{n}w}\\
&=&||x||_{p}^{-n}\hspace{-0.4cm}{\sum_{-\infty<j<\infty}}\hspace{-0.3cm}{\left[\max\{|\boldsymbol{\psi}_{1}(p^{\gamma+j})|,|\boldsymbol{\psi}_{2}(p^{\gamma+j})|\}\right]^{-\alpha}}\hspace{-0.4cm}{\int\limits_{||p^{j}w||_{p}=1}\hspace{-0.3cm}{\chi_{p}(-w\cdot
x_{0})}d^{n}w}\\
&=&||x||_{p}^{-n}\hspace{-0.4cm}{\sum_{-\infty<j<\infty}}\hspace{-0.3cm}{\left[\max\{|\boldsymbol{\psi}_{1}(p^{\gamma+j})|,|\boldsymbol{\psi}_{2}(p^{\gamma+j})|\}\right]^{-\alpha}}\hspace{-0.4cm}{\int\limits_{||z||_{p}=1}\hspace{-0.3cm}{\chi_{p}(-p^{-j}x_{0}\cdot
z)}d^{n}z}
\end{eqnarray*}
By using the formula
\begin{equation} \label{formula1}
\int_{||z||_{p}=1}\chi_{p}\left(-p^{-j}x_{0}\cdot z\right)
d^{n}z=\left\{
\begin{array}{lll}
1-p^{-n}, & \text{if} & \text{ }j\leq 0 , \\
-p^{-n}, & \text{if} & \text{ }j=1, \\
0, & \text{if} & \text{ } j\geq 2,
\end{array}
\right.
\end{equation}
we have that
\begin{equation}\label{K_alpha=}
\begin{split}
K_{\alpha}(x)&=||x||_{p}^{-n}\left\{(1-p^{-n})\sum_{j=0}^{\infty}p^{-nj}\left[\max\{|\boldsymbol{\psi}_{1}(||x||_{p}^{-1}p^{-j})|,|\boldsymbol{\psi}_{2}(||
x||_{p}^{-1}p^{-j})|\}\right]^{-\alpha}\right.\\
&\quad
\Bigg.-\left[\max\{|\boldsymbol{\psi}_{1}(||x||_{p}^{-1}p)|,|\boldsymbol{\psi}_{2}(||x||_{p}^{-1}p)|\}\right]^{-\alpha}
\Bigg\}.
\end{split}
\end{equation}
Since $(1-p^{-n})\sum_{j=0}^{\infty}p^{-nj}=1$, then by
(\ref{K_alpha=}) the desired is obtained.
\end{enumerate}
\end{proof}

As a direct consequence of Theorem
\ref{Theorem_Convolution_Kernel}-$(ii)$ and the fact that
$K_{\alpha}(0)>0$, we have the following Corollary.

\begin{corollary} \label{Corollary_1}
If the function
$\left[\max\{|\boldsymbol{\psi}_{1}(||x||_{p})|,|\boldsymbol{\psi}_{2}(||
x||_{p}^{-1})|\}\right]^{-\alpha}$ is increasing with respect to
$||\cdot||_{p}$, then $K_{\alpha}(x)\geq 0$ for all $x\in
\mathbb{Q}_{p}^{n}$.
\end{corollary}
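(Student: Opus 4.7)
The plan is to invoke the explicit series representation proved in Theorem \ref{Theorem_Convolution_Kernel}-$(ii)$. Abbreviating $M(t):=\left[\max\{|\boldsymbol{\psi}_{1}(t)|,|\boldsymbol{\psi}_{2}(t)|\}\right]^{-\alpha}$ for the radial profile of the symbol, the formula of Theorem \ref{Theorem_Convolution_Kernel}-$(ii)$ rewrites, for $x\in \mathbb{Q}_{p}^{n}\setminus\{0\}$, as
\begin{equation*}
K_{\alpha}(x)=||x||_{p}^{-n}(1-p^{-n})\sum_{j=0}^{\infty}\bigl(M(||x||_{p}^{-1}p^{-j})-M(||x||_{p}^{-1}p)\bigr)p^{-nj}.
\end{equation*}

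First I would observe that for every integer $j\geq 0$ one has the comparison $||x||_{p}^{-1}p^{-j}\leq ||x||_{p}^{-1}<||x||_{p}^{-1}p$. The monotonicity hypothesis of the Corollary is precisely tailored so that this comparison of the two arguments forces the termwise inequality $M(||x||_{p}^{-1}p^{-j})\geq M(||x||_{p}^{-1}p)$: since the underlying function $\max\{|\boldsymbol{\psi}_{1}|,|\boldsymbol{\psi}_{2}|\}$ increases with $||\cdot||_{p}$ and $\alpha>0$, raising it to the $-\alpha$ power reverses the inequality and gives the claim. Hence every summand in the displayed series is non-negative. Combined with the strict positivity of the prefactor $||x||_{p}^{-n}(1-p^{-n})$, this yields $K_{\alpha}(x)\geq 0$ for all $x\neq 0$. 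The remaining point $x=0$ is supplied directly by the observation preceding the Corollary that $K_{\alpha}(0)>0$.

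I do not anticipate any substantive obstacle: once Theorem \ref{Theorem_Convolution_Kernel}-$(ii)$ has packaged $K_{\alpha}(x)$ as a positive factor times a telescoped series of differences of values of $M$, the Corollary reduces to a sign check on each summand. The only point requiring care is to confirm that both arguments $||x||_{p}^{-1}p^{-j}$ ($j\geq 0$) and $||x||_{p}^{-1}p$ lie in the common domain on which the monotonicity hypothesis applies, which is immediate from $p^{-j}\leq 1<p$ in $\mathbb{R}$.
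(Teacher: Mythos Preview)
Your argument is exactly the paper's: invoke the series of Theorem \ref{Theorem_Convolution_Kernel}-$(ii)$, check the sign of each summand via monotonicity, and cover $x=0$ by the observation $K_{\alpha}(0)>0$ stated just before the Corollary.

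One point to tidy up: as printed, the Corollary's hypothesis is that $M(t)=\left[\max\{|\boldsymbol{\psi}_{1}(t)|,|\boldsymbol{\psi}_{2}(t)|\}\right]^{-\alpha}$ is \emph{increasing} in $||\cdot||_{p}$. Taken literally, that would give $M(||x||_{p}^{-1}p^{-j})\leq M(||x||_{p}^{-1}p)$ (since $||x||_{p}^{-1}p^{-j}<||x||_{p}^{-1}p$), i.e.\ the wrong sign. Your justification (``$\max\{|\boldsymbol{\psi}_{1}|,|\boldsymbol{\psi}_{2}|\}$ increases, and raising to the $-\alpha$ power reverses the inequality'') actually derives that $M$ is \emph{decreasing}, which is what the termwise check genuinely needs and what the example immediately following the Corollary (where $\max\{1,||\xi||_{p}^{\beta}\}$ is increasing) confirms is the intended hypothesis. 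So your reasoning is sound for the intended statement; just be explicit that you are reading the hypothesis as ``$\max\{|\boldsymbol{\psi}_{1}|,|\boldsymbol{\psi}_{2}|\}$ increasing'' (equivalently $M$ decreasing), since the printed version appears to contain a misprint.
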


\begin{example}
Let $\boldsymbol{\psi}_{1}$ and $\boldsymbol{\psi}_{2}$ be functions
given by $\boldsymbol{\psi}_{1}(\xi)=1$ and
$\boldsymbol{\psi}_{2}(\xi)=||\xi||_{p}^{\beta}$, fixed
$\beta>\frac{n}{\alpha}$, for all $\xi\in \mathbb{Q}_{p}^{n}$. Then,
by Example \ref{example neg def}, we have that the functions
$\boldsymbol{\psi}_{1}$ and $\boldsymbol{\psi}_{2}$ are negative
definite radial functions on
$\mathbb{Q}_{p}^{n}$.\\
On the other hand, it is easy to check that
$\boldsymbol{\psi}_{1}(||\xi||_{p})\geq
\boldsymbol{\psi}_{2}(||\xi||_{p})$ if and
only if $\xi \in \mathbb{Z}_{p}^{n}$.\\
By defining for $\alpha>0$
\begin{equation*}
K_{\alpha}(x):=\int_{\mathbf{\mathbb{Q}}_{p}^{n}}\chi_{p}(-x\cdot\xi)\left[\max\{1,||\xi||_{p}^{\beta}\}\right]^{-\alpha}d^{n}\xi,
\text{  } x \in \mathbb{Q}_{p}^{n},
\end{equation*}
by a direct calculation one verifies that $K_{\alpha}(x)\geq 0$, for
all $x\in \mathbb{Q}_{p}^{n}$.
\end{example}

Next we will obtain some relevant applications corresponding to the
convolution kernels $K_{\alpha}$, $\alpha>0$.

\begin{definition}
A function $f:\mathbb{Q}_{p}^{n}\rightarrow\mathbb{C}$ is called
positive definite, if
\begin{equation*}
\sum\nolimits_{i,j=1}^{m}f(x_{i}-x_{j})\lambda _{i}\overline{\lambda
_{j}}\geq 0
\end{equation*}
for all $m\in \mathbb{N}$ , $x_{1},\ldots ,x_{m}$ $\in $
$\mathbb{Q}_{p}^{n}$ and $\lambda _{1},\ldots ,\lambda _{m}\in
\mathbb{C}$.
\end{definition}

\begin{lemma}\label{positive_definite} The function
$\left[\max\{|\boldsymbol{\psi}_{1}(||\xi||_{p})|,|\boldsymbol{\psi}_{2}(||\xi||_{p})|\}\right]^{-\alpha}:\mathbb{Q}_{p}^{n}\rightarrow\mathbb{R}_{+}\backslash
\left\{0 \right\}$ is positive definite.
\end{lemma}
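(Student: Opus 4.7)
The natural approach uses Bochner's theorem on the locally compact abelian group $\mathbb{Q}_{p}^{n}$, which characterises a continuous function $g$ as positive definite iff its Fourier transform (as a distribution) is a non-negative Radon measure. Applied to $g(\xi) := \left[\max\{|\boldsymbol{\psi}_{1}(||\xi||_{p})|,|\boldsymbol{\psi}_{2}(||\xi||_{p})|\}\right]^{-\alpha}$, and using Remark \ref{Rem_convolution_kernel} to identify $\mathcal{F}^{-1}(g)=K_{\alpha}$, this reduces the positive-definiteness claim to showing $K_{\alpha}\geq 0$. Concretely, Fourier-inversion combined with $\overline{\chi_{p}(y)}=\chi_{p}(-y)$ yields the Parseval identity
$$\sum_{i,j=1}^{m} g(\xi_{i} - \xi_{j})\lambda_{i}\overline{\lambda_{j}} = \int_{\mathbb{Q}_{p}^{n}} K_{\alpha}(x)\left|\sum_{i=1}^{m} \lambda_{i}\,\chi_{p}(x\cdot\xi_{i})\right|^{2} d^{n}x,$$
which is $\geq 0$ as soon as $K_{\alpha}\geq 0$, giving exactly the sum condition in the definition of positive definiteness.

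To obtain $K_{\alpha}\geq 0$, I would invoke the explicit series representation from Theorem \ref{Theorem_Convolution_Kernel}(ii) together with the geometric identity $(1-p^{-n})\sum_{j\geq 0}p^{-nj}=1$. Writing $\phi(t):=\left[\max\{|\boldsymbol{\psi}_{1}(t)|,|\boldsymbol{\psi}_{2}(t)|\}\right]^{-\alpha}$, the bracketed quantity in that formula rewrites as $(1-p^{-n})\sum_{j\geq 0}\bigl[\phi(||x||_{p}^{-1}p^{-j})-\phi(||x||_{p}^{-1}p)\bigr]p^{-nj}$. Since every argument $||x||_{p}^{-1}p^{-j}$ with $j\geq 0$ satisfies $||x||_{p}^{-1}p^{-j}\leq ||x||_{p}^{-1}p$, this is non-negative whenever $\phi$ is non-increasing in $t$, equivalently whenever $\max\{|\boldsymbol{\psi}_{1}|,|\boldsymbol{\psi}_{2}|\}$ is non-decreasing in $||\xi||_{p}$. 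This is precisely the hypothesis of Corollary \ref{Corollary_1}, which then closes the argument via Bochner.

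The main obstacle is that the max of two negative definite functions is not in general negative definite, so the short route via the Bernstein/subordination identity $M^{-\alpha}=\Gamma(\alpha)^{-1}\int_{0}^{\infty} t^{\alpha-1}e^{-tM}\,dt$ combined with Schoenberg's theorem is unavailable: one cannot simply deduce that $e^{-tM}$ is positive definite for each $t$ from the fact that $e^{-t|\boldsymbol{\psi}_{i}|}$ is. Likewise, the decomposition $g = |\boldsymbol{\psi}_{1}|^{-\alpha}\,1_{B_{r}^{n}} + |\boldsymbol{\psi}_{2}|^{-\alpha}\,1_{\mathbb{Q}_{p}^{n}\setminus B_{r}^{n}}$ only exhibits the first summand as positive definite (by the Schur product theorem, since $|\boldsymbol{\psi}_{1}|^{-\alpha}$ is positive definite by subordination and $1_{B_{r}^{n}}$ is positive definite because its Fourier transform $p^{rn}1_{B_{-r}^{n}}$ is non-negative), while $1_{\mathbb{Q}_{p}^{n}\setminus B_{r}^{n}}$ is not positive definite in general. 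Hence the explicit Fourier inversion of Theorem \ref{Theorem_Convolution_Kernel}(ii), combined with the monotonicity supplied by Corollary \ref{Corollary_1}, appears to be the cleanest path.
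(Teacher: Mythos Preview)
Your route via Bochner's theorem and the sign of $K_\alpha$ is entirely different from the paper's, which proceeds elementarily: it records the symmetry $g(x-y)=g(y-x)$ and evenness $g(-x)=g(x)$ of the radial symbol $g$, claims that for integer coefficients $c_1,\dots,c_m$ the form $\sum_{i,j}g(\xi_i-\xi_j)c_ic_j$ coincides with $\sum_{i,j}g(\xi_i-\xi_j)c_i^{2}\geq 0$, and then cites \cite[Exercise~3.7, p.~13]{Berg-Gunnar} to upgrade from integer to arbitrary complex coefficients. No Fourier inversion, no $K_\alpha$, and no monotonicity assumption enter the paper's argument.

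That said, your argument carries a genuine gap relative to the lemma as stated. The lemma is asserted only under Hypothesis~A, which imposes no monotonicity on $\max\{|\boldsymbol\psi_1|,|\boldsymbol\psi_2|\}$; yet the step $K_\alpha\geq 0$ is exactly where you need $\phi(t)=[\max\{|\boldsymbol\psi_1(t)|,|\boldsymbol\psi_2(t)|\}]^{-\alpha}$ to be non-increasing in $t$, i.e.\ the additional hypothesis of Corollary~\ref{Corollary_1}. Without that hypothesis the bracketed series in Theorem~\ref{Theorem_Convolution_Kernel}(ii) need not be non-negative (this is precisely why Corollary~\ref{Corollary_1} is stated conditionally in the paper), and then your Parseval identity no longer controls the sign of the quadratic form. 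So the Bochner route, as you have laid it out, proves only the special case covered by Corollary~\ref{Corollary_1} and does not reach the lemma in the generality claimed; your own discussion of the obstacles to Schoenberg-type shortcuts in the final paragraph is accurate, but it does not supply the missing step.
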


\begin{proof}
We first note that
\begin{eqnarray*}
\left[\max\{|\boldsymbol{\psi}_{1}(||x-y||_{p})|,|\boldsymbol{\psi}_{2}(||x-y||_{p})|\}\right]^{-\alpha}\hspace{-0.4cm}&=&\hspace{-0.3cm}\left[\max\{|\boldsymbol{\psi}_{1}(||y-x||_{p})|,|\boldsymbol{\psi}_{2}(||y-x||_{p})|\}\right]^{-\alpha}\\
&=&\hspace{-0.3cm}\left[\max\{|\boldsymbol{\psi}_{2}(||y-x||_{p})|,|\boldsymbol{\psi}_{1}(||y-x||_{p})|\}\right]^{-\alpha}
\end{eqnarray*}
for all $x, y \in \mathbb{Q}_{p}^{n}$.\\
Let $c_{1}, c_{2},\ldots, c_{m}$ be integer numbers and $\xi_{1},
\xi_{2},\ldots,\xi_{m}\in \mathbb{Q}_{p}^{n}$, $m\in \mathbb{N}$.
Then, by a direct calculation one verifies that
\begin{equation*}
\sum\nolimits_{i,j=1}^{m}\left[\max\{|\boldsymbol{\psi}_{1}(||\xi_{i}-\xi_{j}||_{p})|,|\boldsymbol{\psi}_{2}(||\xi_{i}-\xi_{j}||_{p})|\}\right]^{-\alpha}c_{i}c_{j}
\end{equation*}
is exactly
$\sum\nolimits_{i,j=1}^{m}\left[\max\{|\boldsymbol{\psi}_{1}(||\xi_{i}-\xi_{j}||_{p})|,|\boldsymbol{\psi}_{2}(||\xi_{i}-\xi_{j}||_{p})|\}\right]^{-\alpha}c_{i}^{2}\geq
0$.\\
On the other hand, it is clear that
\begin{equation*}
\left[\max\{|\boldsymbol{\psi}_{1}(||-x||_{p})|,|\boldsymbol{\psi}_{2}(||-x||_{p})|\}\right]^{-\alpha}=\left[\max\{|\boldsymbol{\psi}_{1}(||x||_{p})|,|\boldsymbol{\psi}_{2}(||x||_{p})|\}\right]^{-\alpha},
\end{equation*}
for all $x\in \mathbb{Q}_{p}^{n}$.\\
Therefore, the desired result follow from \cite[Exercise 3.7-p.
13]{Berg-Gunnar}.
\end{proof}

\begin{lemma}\label{lemma_measures} $K_{\alpha}$ is a positive bounded measure on
$\mathbb{Q}_{p}^{n}$.
\end{lemma}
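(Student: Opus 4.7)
The plan is to invoke the Bochner theorem for locally compact abelian groups (see e.g. \cite[Theorem 3.12]{Berg-Gunnar}): every continuous positive definite function on $\mathbb{Q}_{p}^{n}$ is the Fourier transform of a unique positive bounded Radon measure. So I will first show the symbol is continuous and positive definite, then identify $K_{\alpha}$ with the measure produced by Bochner.

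First I would verify that
\[
s(\xi) := \left[\max\{|\boldsymbol{\psi}_{1}(\|\xi\|_{p})|, |\boldsymbol{\psi}_{2}(\|\xi\|_{p})|\}\right]^{-\alpha}
\]
is a continuous, positive definite function. Continuity is immediate from the fact that $s$ is radial, hence locally constant by Lemma \ref{radial-locally constant}, and in particular continuous by Remark \ref{Obs}-$(i)$; Hypothesis A guarantees $|\boldsymbol{\psi}_{1}(\|\xi\|_{p})| \neq 0$ so the maximum never vanishes and the negative power is well defined and strictly positive. Positive definiteness is precisely Lemma \ref{positive_definite}.

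Next I would apply Bochner's theorem to obtain a unique positive bounded Radon measure $\mu$ on $\mathbb{Q}_{p}^{n}$ with $\widehat{\mu}(\xi) = s(\xi)$ for every $\xi \in \mathbb{Q}_{p}^{n}$. It remains to identify $K_{\alpha}$ with $\mu$ as distributions. For $\varphi \in \mathcal{D}(\mathbb{Q}_{p}^{n})$, since $s$ induces a regular distribution (Remark \ref{Rem_convolution_kernel}) and $\mathcal{F}^{-1}(\varphi) \in \mathcal{D}(\mathbb{Q}_{p}^{n})$ has compact support, a Fubini exchange (justified by the finiteness of $\mu$ and the integrability of $\mathcal{F}^{-1}(\varphi)$) gives
\[
\langle K_{\alpha}, \varphi \rangle = \langle s, \mathcal{F}^{-1}(\varphi) \rangle = \int_{\mathbb{Q}_{p}^{n}} \widehat{\mu}(\xi)\, \mathcal{F}^{-1}(\varphi)(\xi)\, d^{n}\xi = \int_{\mathbb{Q}_{p}^{n}} \mathcal{F}(\mathcal{F}^{-1}(\varphi))(x)\, d\mu(x) = \int_{\mathbb{Q}_{p}^{n}} \varphi(x)\, d\mu(x),
\]
the final equality being Fourier inversion on the Bruhat--Schwartz class. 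Therefore $K_{\alpha}$ coincides with the positive bounded Radon measure $\mu$.

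The only genuine subtlety is locating Bochner's theorem in the precise form needed for $\mathbb{Q}_{p}^{n}$ and reconciling the sign conventions for $\mathcal{F}$ and $\mathcal{F}^{-1}$ used in this paper with those in \cite{Berg-Gunnar}; both are routine given the machinery set up in Section \ref{Fourier Analysis}. Notice that boundedness of $\mu$ is part of the conclusion of Bochner's theorem and does not require any separate estimate on $s$.
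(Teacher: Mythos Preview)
Your proposal is correct and follows essentially the same route as the paper: both invoke Lemma \ref{positive_definite} for positive definiteness of the symbol and then apply Bochner's theorem \cite[Theorem 3.12]{Berg-Gunnar} to conclude that $K_{\alpha}$ is a positive bounded measure. You are simply more explicit than the paper about continuity of the symbol and about the distributional identification $K_{\alpha}=\mu$, which the paper handles by a bare citation to \cite[Section 4.9]{Albeverio et al}.
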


\begin{proof}
By \cite[Proposition 4.9.1]{Albeverio et al}, Remark
\ref{Rem_convolution_kernel} and (\ref{Definition_K_a}) we have that
\begin{equation*}
\mathcal{F}_{x\rightarrow
\xi}\left(K_{\alpha}\right)=\left[\max\{|\boldsymbol{\psi}_{1}(||x||_{p})|,|\boldsymbol{\psi}_{2}(||x||_{p})|\}\right]^{-\alpha}\in
\mathcal{D}'(\mathbb{Q}_{p}^{n}).
\end{equation*}
The result follow from Lemma \ref{positive_definite}, \cite[Section
4.9]{Albeverio et al} and \cite[Theorem 3.12]{Berg-Gunnar}.
\end{proof}

\begin{theorem}\label{convolution_semigroup}
Suppose that $|\boldsymbol{\psi}_{1}(0)|\geq 1$. Then, the family
$\left(K_{\alpha}\right)_{\alpha>0}$ determine a convolution
semigroup on $\mathbb{Q}_{p}^{n}$, i.e.
$\left(K_{\alpha}\right)_{\alpha>0}$ satisfies the following
properties:
\begin{enumerate}
\item[(i)] For all $\alpha>0$, $K_{\alpha}$ is a positive bounded measure
on $\mathbb{Q}_{p}^{n}$.
\item[(ii)] For all $\alpha>0$, $K_{\alpha}(\mathbb{Q}_{p}^{n})\leq
1$.
\item[(iii)] For all $\alpha_{1}, \alpha_{2}>0$ we have that $K_{\alpha_{1}}\ast
K_{\alpha_{2}}=K_{\alpha_{1}+\alpha_{2}}$.
\item[(iv)] $\lim_{\alpha\rightarrow 0}K_{\alpha}=\delta$, where $\delta$ is the Dirac delta function.
\end{enumerate}
\end{theorem}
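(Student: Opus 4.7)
The plan is to verify each of the four axioms by transferring the problem to the Fourier side, where the symbol
\[
\Psi_{\alpha}(\xi) := \left[\max\{|\boldsymbol{\psi}_{1}(||\xi||_{p})|,|\boldsymbol{\psi}_{2}(||\xi||_{p})|\}\right]^{-\alpha}
\]
decouples multiplicatively in $\alpha$ and turns the total-mass, semigroup, and continuity axioms into straightforward statements about $\Psi_{\alpha}$. Part (i) is already contained in Lemma \ref{lemma_measures}. For (ii), I would use the elementary identity $\mu(\mathbb{Q}_{p}^{n}) = \widehat{\mu}(0)$ valid for any positive bounded measure $\mu$; combined with $\mathcal{F}(K_{\alpha}) = \Psi_{\alpha}$ (recorded already in the proof of Lemma \ref{lemma_measures}) this yields
\[
K_{\alpha}(\mathbb{Q}_{p}^{n}) = \left[\max\{|\boldsymbol{\psi}_{1}(0)|,|\boldsymbol{\psi}_{2}(0)|\}\right]^{-\alpha},
\]
which is at most $1$ because the standing hypothesis $|\boldsymbol{\psi}_{1}(0)| \geq 1$ forces the base of the power to be $\geq 1$.

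For (iii), since $K_{\alpha_{1}}$ and $K_{\alpha_{2}}$ are bounded positive measures, their convolution is again a bounded measure whose Fourier transform equals the pointwise product of the factors' Fourier transforms. Hence
\[
\mathcal{F}(K_{\alpha_{1}} \ast K_{\alpha_{2}}) = \Psi_{\alpha_{1}}\Psi_{\alpha_{2}} = \Psi_{\alpha_{1}+\alpha_{2}} = \mathcal{F}(K_{\alpha_{1}+\alpha_{2}}),
\]
and injectivity of the Fourier transform on $\mathcal{D}'(\mathbb{Q}_{p}^{n})$ gives the semigroup law.

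For (iv), I would test against arbitrary $\varphi \in \mathcal{D}(\mathbb{Q}_{p}^{n})$ via the Parseval-type identity
\[
\langle K_{\alpha}, \varphi \rangle = \langle \mathcal{F}(K_{\alpha}), \mathcal{F}^{-1}\varphi \rangle = \int_{\mathbb{Q}_{p}^{n}} \Psi_{\alpha}(\xi) (\mathcal{F}^{-1}\varphi)(\xi) \, d^{n}\xi.
\]
Since $\mathcal{F}^{-1}\varphi$ is Bruhat-Schwartz and therefore has compact support $K$, and since $\max\{|\boldsymbol{\psi}_{1}|,|\boldsymbol{\psi}_{2}|\}$ is continuous and strictly positive on $K$ (Hypothesis A forces $\boldsymbol{\psi}_{1}(\xi) \neq 0$ everywhere, so the max is bounded above and below by positive constants on $K$), the symbol $\Psi_{\alpha}$ converges to $1$ uniformly on $K$ as $\alpha \to 0^{+}$. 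Uniform convergence then lets us pass to the limit to obtain $\int_{K} (\mathcal{F}^{-1}\varphi)(\xi)\, d^{n}\xi = \varphi(0) = \langle \delta, \varphi \rangle$. The main obstacle is keeping the mode of convergence in (iv) compatible with the notion of a convolution semigroup; here the non-archimedean feature that $\mathcal{F}^{-1}\varphi$ has compact support is what makes the otherwise-delicate dominated-convergence step immediate and removes the need for any growth estimate on $\Psi_{\alpha}$ at infinity.
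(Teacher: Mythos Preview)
Your argument is correct. Parts (i) and (ii) match the paper's approach essentially verbatim (your total-mass formula $[\max\{|\boldsymbol{\psi}_{1}(0)|,|\boldsymbol{\psi}_{2}(0)|\}]^{-\alpha}$ simplifies to the paper's $|\boldsymbol{\psi}_{1}(0)|^{-\alpha}$ once one observes that $0\in B_{r}^{n}$ forces $|\boldsymbol{\psi}_{1}(0)|\geq|\boldsymbol{\psi}_{2}(0)|$, but your inequality already suffices).

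Where you diverge is in (iii) and (iv). For (iii) the paper does \emph{not} appeal to the general fact that the Fourier transform sends convolution of bounded measures to the pointwise product; instead it exploits the $p$-adic-specific observation (Lemma~\ref{operator_def}) that the symbol is locally constant, so by \cite[Theorem~4.9.3]{Albeverio et al} each $K_{\alpha}$ has \emph{compact support} as a distribution, and then invokes the $\mathcal{D}'$-level product/convolution theory from \cite{V-V-Z}. Your measure-theoretic route is cleaner and needs no structural information about $K_{\alpha}$ beyond part (i); the paper's route, on the other hand, yields the additional fact $\operatorname{supp}K_{\alpha}\subset B_{N}^{n}$, which is of independent interest. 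For (iv) the paper simply cites \cite[Example~9, p.~44 and Chapter~1, Section~VI]{V-V-Z}, whereas you give a self-contained argument: pairing with $\varphi$, passing to the Fourier side, and using that $\mathcal{F}^{-1}\varphi\in\mathcal{D}$ has compact support so that $\Psi_{\alpha}\to1$ uniformly there. Your proof of (iv) is more informative than the paper's citation and makes transparent exactly which feature of the non-archimedean setting (compact support of transformed test functions) is doing the work.
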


\begin{proof}
\item[(i)] The result follows from Lemma \ref{lemma_measures}.
\item[(ii)] By Lemma \ref{lemma_measures} and \cite[$(7)$-p.
14]{Berg-Gunnar} we have that
$(K_{\alpha})(\mathbb{Q}_{p}^{n})=\frac{1}{|\boldsymbol{\psi}_{1}(0)|^{\alpha}}$.
Therefore, if $|\boldsymbol{\psi}_{1}(0)|\geq 1$, then
$(K_{\alpha})(\mathbb{Q}_{p}^{n})\leq 1$.
\item[(iii)] Let $\alpha_{1}$ and $\alpha_{2}$ be real numbers such that $\alpha_{1},
\alpha_{2}>0$. Then, by \cite[Proposition 4.9.1]{Albeverio et al},
Remark \ref{Rem_convolution_kernel}, Lemma \ref{operator_def} and
(\ref{Definition_K_a}) we have for $i=1,2$,
$\mathcal{F}_{x\rightarrow
\xi}\left(K_{\alpha_{i}}\right)=\left[\max\{|\boldsymbol{\psi}_{1}(||x||_{p})|,|\boldsymbol{\psi}_{2}(||x||_{p})|\}\right]^{-\alpha_{i}}\in$
{\LARGE$\varepsilon$}$(\mathbb{Q}_{p}^{n})\bigcap\mathcal{D}'(\mathbb{Q}_{p}^{n})$.
Moreover, by \cite[Theorem 4.9.3]{Albeverio et al} and \cite[Lemma
4.7.2]{Albeverio et al}, respectively, there exist integers
$N_{i}:=N_{i}(\alpha_{i})$, such that $supp(K_{\alpha_{i}})\subset
B_{N_{i}}^{n}$, $i=1,2$, and the convolution $K{\alpha_{1}}\ast
K{\alpha_{2}}$ exists.\\
On the other hand, by the $n$-dimensional version of
\cite[Theorem-p. 115]{V-V-Z} and \cite[Examples 1 and 2- p. 113 y
114, respectively]{V-V-Z}, we have that the product
$\left[\max\{|\boldsymbol{\psi}_{1}|,|\boldsymbol{\psi}_{2}|\}\right]^{-\alpha_{1}}\left[\max\{|\boldsymbol{\psi}_{1}|,|\boldsymbol{\psi}_{2}|\}\right]^{-\alpha_{2}}$
exists and
\begin{equation*}
\mathcal{F}^{-1}_{\xi\rightarrow
x}\left(\left[\max\{|\boldsymbol{\psi}_{1}|,|\boldsymbol{\psi}_{2}|\}\right]^{-\alpha_{1}}\left[\max\{|\boldsymbol{\psi}_{1}|,|\boldsymbol{\psi}_{2}|\}\right]^{-\alpha_{2}}
\right)=K_{\alpha_{1}}\ast K_{\alpha_{2}}.
\end{equation*}
Therefore, for $\varphi \in \mathcal{D}(\mathbb{Q}_{p}^{n})$ we have
that
\begin{eqnarray*}
\left\langle K_{\alpha_{1}+\alpha_{2}},\varphi
\right\rangle&=&\left\langle \mathcal{F}^{-1}_{\xi\rightarrow
x}\left(\left[\max\{|\boldsymbol{\psi}_{1}|,|\boldsymbol{\psi}_{2}|\}\right]^{-(\alpha_{1}+\alpha_{2})}\right),\varphi
\right\rangle\\
&=&\left\langle\mathcal{F}^{-1}_{\xi\rightarrow
x}\left(\left[\max\{|\boldsymbol{\psi}_{1}|,|\boldsymbol{\psi}_{2}|\}\right]^{-\alpha_{1}}\left[\max\{|\boldsymbol{\psi}_{1}|,|\boldsymbol{\psi}_{2}|\}\right]^{-\alpha_{2}}
\right),\varphi\right\rangle\\
&=&\left\langle K_{\alpha_{1}}\ast
K_{\alpha_{2}},\varphi\right\rangle.
\end{eqnarray*}
\item[(iv)] The result follows from Remark
\ref{Rem_convolution_kernel} and \cite[Example 9-p. 44 and Chapter
1-Section VI]{V-V-Z}.
\end{proof}

As a direct consequence of above theorem, we have the following
corollary.

\begin{corollary} \label{Corollary_probability_measure} If $|\boldsymbol{\psi}_{1}(0)|=1$, then
$K_{\alpha}$, $\alpha>0$, is a probability measure on
$\mathbb{Q}_{p}^{n}$.
\end{corollary}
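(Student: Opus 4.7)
The plan is to extract this directly from Theorem \ref{convolution_semigroup}. By part (i) of that theorem, $K_{\alpha}$ is already a positive bounded measure on $\mathbb{Q}_{p}^{n}$ for every $\alpha>0$, so the only remaining requirement for $K_{\alpha}$ to be a probability measure is that its total mass equals $1$.

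For this, I would invoke the computation carried out in the proof of Theorem \ref{convolution_semigroup}-(ii), where it was shown (via \cite[$(7)$-p.~14]{Berg-Gunnar}) that
\begin{equation*}
K_{\alpha}(\mathbb{Q}_{p}^{n})=\frac{1}{|\boldsymbol{\psi}_{1}(0)|^{\alpha}}.
\end{equation*}
Under the hypothesis $|\boldsymbol{\psi}_{1}(0)|=1$, this expression simplifies to $1$ for every $\alpha>0$, giving the claim.

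There is no real obstacle here, since all the substantive work (establishing positivity, boundedness, and the explicit value of the total mass) was already done in the preceding theorem; the corollary is simply the specialization of the inequality $|\boldsymbol{\psi}_{1}(0)|\geq 1$ to the boundary case $|\boldsymbol{\psi}_{1}(0)|=1$.
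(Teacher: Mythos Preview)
Your proposal is correct and matches the paper's approach exactly: the paper gives no explicit proof, merely stating the corollary as ``a direct consequence of the above theorem,'' and what you have written is precisely that direct consequence spelled out --- positivity and boundedness from part (i), and total mass $K_{\alpha}(\mathbb{Q}_{p}^{n})=|\boldsymbol{\psi}_{1}(0)|^{-\alpha}=1$ from the computation in the proof of part (ii).
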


\section{\label{Green and Heat_Kernel} The Green function and the heat Kernel}

In this section we will study certain properties corresponding to
the Green function and the heat Kernel attached to operator
$\mathcal{A}^{\alpha}$.

\subsection{The Green function }
The Dirac $\delta$-function is defined by
\begin{equation*}
\left<\delta,\varphi \right>=\varphi(0), \text{  } \forall\varphi\in
\mathcal{D}(\mathbb{Q}_{p}^{n}).
\end{equation*}
It is clear that $\delta \in \mathcal{D}'(\mathbb{Q}_{p}^{n})$ and
$\delta(x)=0$ for all $x\neq 0$, i.e. $supp(\delta)=\left\{0
\right\}$, see \cite{Albeverio et al}, \cite{V-V-Z} for details.
\begin{definition}
Let $G:=G(m,\alpha)\in \mathcal{D}'(\mathbb{Q}_{p}^{n})$ be a
distribution satisfying the equation
\begin{equation} \label{Def_Green}
(m^{2}+\mathcal{A}^{\alpha})G=\delta, \ \ \text{ } m\in
\mathbb{R}_{+}\backslash \left\{0\right\}.
\end{equation}
Then $G$ is called a Green function of the operator
$\mathcal{A}^{\alpha}$.
\end{definition}

\begin{lemma}\label{Green_function}
The distribution
\begin{equation*}
G(x):=\mathcal{F}^{-1}_{\xi \rightarrow
x}\left(\frac{1}{m^{2}+\left[\max\{|\boldsymbol{\psi}_{1}(||\xi||_{p})|,|\boldsymbol{\psi}_{2}(||\xi||_{p})|\}\right]^{-\alpha}
}\right), \text{  } x\in \mathbf{\mathbb{Q}}_{p}^{n},
\end{equation*}
is the Green function of the operator $\mathcal{A}^{\alpha}$.
\end{lemma}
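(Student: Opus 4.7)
The plan is to verify the defining equation $(m^{2}+\mathcal{A}^{\alpha})G=\delta$ by passing to the Fourier side, where $\mathcal{A}^{\alpha}$ acts as multiplication by the symbol. Concretely, I would first extend the definition of $\mathcal{A}^{\alpha}$ from $\mathcal{D}(\mathbb{Q}_{p}^{n})$ to distributions via the identity $\mathcal{F}(\mathcal{A}^{\alpha}T)=\bigl[\max\{|\boldsymbol{\psi}_{1}(\|\xi\|_{p})|,|\boldsymbol{\psi}_{2}(\|\xi\|_{p})|\}\bigr]^{-\alpha}\widehat{T}$ for $T\in\mathcal{D}'(\mathbb{Q}_{p}^{n})$, which is the natural extension in view of Definition \ref{Def_Bessel_potential} and the fact (established in Lemma \ref{operator_def}) that the symbol is locally constant, hence a well-defined multiplier on $\mathcal{D}'(\mathbb{Q}_{p}^{n})$.

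Next I would compute $\widehat{G}$ directly from the proposed formula using the Fourier inversion property $\mathcal{F}(\mathcal{F}^{-1}f)=f$. Since $m^{2}>0$ and the symbol is nonnegative, the function
\[
h(\xi):=\frac{1}{m^{2}+\bigl[\max\{|\boldsymbol{\psi}_{1}(\|\xi\|_{p})|,|\boldsymbol{\psi}_{2}(\|\xi\|_{p})|\}\bigr]^{-\alpha}}
\]
is well-defined, bounded by $m^{-2}$, and locally constant on $\mathbb{Q}_{p}^{n}$ (as a composition of the locally constant symbol with the smooth map $t\mapsto 1/(m^{2}+t)$). Hence $h\in L^{1}_{loc}\cap\mathcal{D}'(\mathbb{Q}_{p}^{n})$, so $G=\mathcal{F}^{-1}(h)$ makes sense as a distribution and $\widehat{G}=h$ after applying $\mathcal{F}$.

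Now I would multiply by $m^{2}+\bigl[\max\{|\boldsymbol{\psi}_{1}|,|\boldsymbol{\psi}_{2}|\}\bigr]^{-\alpha}$ on the Fourier side. This product is licit because the multiplier belongs to {\LARGE$\varepsilon$}$(\mathbb{Q}_{p}^{n})$ (by Lemma \ref{operator_def}), and the product rule for such multipliers yields
\[
\bigl(m^{2}+\bigl[\max\{|\boldsymbol{\psi}_{1}|,|\boldsymbol{\psi}_{2}|\}\bigr]^{-\alpha}\bigr)\widehat{G}=1.
\]
Recognizing $1=\widehat{\delta}$, I apply $\mathcal{F}^{-1}$ to both sides and use the extended action of $\mathcal{A}^{\alpha}$ on $\mathcal{D}'(\mathbb{Q}_{p}^{n})$ to conclude $(m^{2}+\mathcal{A}^{\alpha})G=\delta$, which is precisely equation \eqref{Def_Green}.

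The main technical point, rather than a genuine obstacle, is justifying the multiplication in $\mathcal{D}'(\mathbb{Q}_{p}^{n})$: one must confirm that $h$ and the symbol are compatible for the pointwise product, and that the Fourier transform intertwines this product with the action of $m^{2}+\mathcal{A}^{\alpha}$. Both facts follow from the local constancy of the symbol (Lemma \ref{operator_def}, Remark \ref{Obs}) and the standard calculus of Bruhat--Schwartz distributions on $\mathbb{Q}_{p}^{n}$, so the argument reduces essentially to Fourier inversion.
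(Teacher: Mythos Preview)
Your proposal is correct and follows essentially the same route as the paper: both arguments verify \eqref{Def_Green} on the Fourier side by observing that $(m^{2}+[\max\{|\boldsymbol{\psi}_{1}|,|\boldsymbol{\psi}_{2}|\}]^{-\alpha})\,\widehat{G}=1=\widehat{\delta}$ and then inverting. The only cosmetic difference is that the paper phrases the computation via the pairing $\langle(m^{2}+\mathcal{A}^{\alpha})G,\varphi\rangle=\langle G,(m^{2}+\mathcal{A}^{\alpha})\varphi\rangle$ with test functions, whereas you invoke Definition~\ref{Def_Bessel_potential} directly on distributions; since the symbol is radial (hence even) these two extensions of $\mathcal{A}^{\alpha}$ to $\mathcal{D}'(\mathbb{Q}_{p}^{n})$ coincide.
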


\begin{proof}
Since $G\in L_{loc}^{1}(\mathbb{Q}_p^{n})$, then $G$ determine a
regular distribution by the formula
\begin{equation*}
\left<G,\varphi\right> =\int_{\mathbb{Q}_{p}^{n}}G(x)
\varphi(x)d^{n}x, \text{ } \varphi\in
\mathcal{D}(\mathbb{Q}_{p}^{n}).
\end{equation*}
Then, for any $\varphi \in \mathcal{D}(\mathbb{Q}_{p}^{n})$ we have
that
\begin{align*}
&\left<(m^{2}+\mathcal{A}^{\alpha})G,\varphi\right>\\
&=\hspace{-0.1cm}\left<G,(m^{2}+\mathcal{A}^{\alpha})\varphi\right>\\
&=\hspace{-0.1cm}\left<\mathcal{F}^{-1}_{\xi \rightarrow
x}\left(\frac{1}{m^{2}+\left[\max\{|\boldsymbol{\psi}_{1}(||\xi||_{p})|,|\boldsymbol{\psi}_{2}(||\xi||_{p})|\}\right]^{-\alpha}
}\right),(m^{2}+\mathcal{A}^{\alpha})\varphi\right>\\
&=\hspace{-0.1cm}\left<\hspace{-0.1cm}\frac{1}{m^{2}+\left[\max\{|\boldsymbol{\psi}_{1}(||\xi||_{p})|,|\boldsymbol{\psi}_{2}(||\xi||_{p})|\}\right]^{-\alpha}
},\hspace{-0.1cm}\left(\hspace{-0.1cm}\left[\max\{|\boldsymbol{\psi}_{1}(||\xi||_{p})|,\hspace{-0.1cm}|\boldsymbol{\psi}_{2}(||\xi||_{p})|\}\right]^{-\alpha}\hspace{-0.2cm}+m^{2}\right)\hspace{-0.1cm}\widehat{\varphi}\hspace{-0.1cm}\right>\\
&=<1,\widehat{\varphi}>\\
&=<\delta,\varphi>
\end{align*}
All the above shows that
\begin{equation*}
(m^{2}+\mathcal{A}^{\alpha})G=\delta, \ \ \text{ } m\in
\mathbb{R}_{+}\backslash \left\{0\right\}.
\end{equation*}
Therefore, $G$ is the Green function of the operator
$\mathcal{A}^{\alpha}$.
\end{proof}

\begin{theorem}\label{properties_Green}
The Green function $G$ satisfies the following properties:
\begin{enumerate}
\item[(i)] For all $x\in \mathbb{Q}_{p}^{n}\backslash \left\{0\right\}$ we have that
\begin{equation*}
\begin{split}
&G(x)=||x||_{p}^{-n}\left\{(1-p^{-n})\sum_{j=0}^{\infty}\frac{p^{-nj}}{m^{2}+\left[\max\{|\boldsymbol{\psi}_{1}(p^{\gamma-j})|,|\boldsymbol{\psi}_{2}(p^{\gamma-j})|\}\right]^{-\alpha}}\right.\\
&\quad
\Bigg.-\frac{1}{m^{2}+\left[\max\{|\boldsymbol{\psi}_{1}(p^{\gamma+1})|,|\boldsymbol{\psi}_{2}(p^{\gamma+1})|\}\right]^{-\alpha}}
\Bigg\}.
\end{split}
\end{equation*}
\item[(ii)] If $\left[\max\{|\boldsymbol{\psi}_{1}(||\xi||_{p})|,|\boldsymbol{\psi}_{2}(||\xi||_{p})|\}\right]\geq
1$ for all $\xi \in \mathbb{Q}_{p}^{n}$, then, there exist positive
real constants $K_{1}:=\frac{1}{m^{2}(m^{2}+1)}$ and
$K_{2}:=\frac{1}{m^{2}}$ such that
\begin{equation*}
-K_{1}||x||_{p}^{-n}\leq G(x)\leq K_{2}||x||_{p}^{-n}, \text{ for
all } x\in \mathbb{Q}_{p}^{n}\backslash \left\{0\right\},
\end{equation*}
\item[(iii)] $G(x)$ is a real-valued, radial and continuous
function.
\end{enumerate}
\end{theorem}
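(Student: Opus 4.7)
For part (i), I would mirror the proof of Theorem \ref{Theorem_Convolution_Kernel}(ii) with the symbol $[\max\{|\boldsymbol{\psi}_{1}|,|\boldsymbol{\psi}_{2}|\}]^{-\alpha}$ replaced by the bounded real function $g(\xi)=[m^{2}+[\max\{|\boldsymbol{\psi}_{1}(\|\xi\|_{p})|,|\boldsymbol{\psi}_{2}(\|\xi\|_{p})|\}]^{-\alpha}]^{-1}$. Write $x=p^{\gamma}x_{0}$ with $\|x_{0}\|_{p}=1$ and $\gamma=\mathrm{ord}(x)$; the change of variable $w=p^{\gamma}\xi$ factors out $\|x\|_{p}^{-n}$ and turns the character into $\chi_{p}(-x_{0}\cdot w)$. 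Decomposing $\mathbb{Q}_{p}^{n}=\bigsqcup_{j\in\mathbb{Z}}S_{j}^{n}$, on the sphere $\|w\|_{p}=p^{j}$ the function $g$ is constant with value $[m^{2}+[\max\{|\boldsymbol{\psi}_{1}(p^{\gamma+j})|,|\boldsymbol{\psi}_{2}(p^{\gamma+j})|\}]^{-\alpha}]^{-1}$; the further substitution $z=p^{j}w$ reduces each sphere integral to $\int_{\|z\|_{p}=1}\chi_{p}(-p^{-j}x_{0}\cdot z)\,d^{n}z$, to which formula (\ref{formula1}) applies. Terms with $j\geq 2$ vanish, the term $j=1$ produces (after the weight $-p^{-n}$ combines with the Jacobian $p^{nj}=p^{n}$) the isolated contribution $-[m^{2}+[\max(p^{\gamma+1})]^{-\alpha}]^{-1}$, and reindexing $j\mapsto -j$ on the range $j\leq 0$ turns the remaining terms into the claimed geometric series with weights $(1-p^{-n})p^{-nj}$ evaluated at the argument $p^{\gamma-j}$.

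For part (ii), denote by $S_{1}$ the series and by $S_{2}$ the isolated term appearing in (i), so that $G(x)=\|x\|_{p}^{-n}(S_{1}-S_{2})$. Hypothesis A guarantees $\max\{|\boldsymbol{\psi}_{1}|,|\boldsymbol{\psi}_{2}|\}>0$, hence $[\max(\cdot)]^{-\alpha}\geq 0$ and so $0<(m^{2}+[\max(\cdot)]^{-\alpha})^{-1}\leq 1/m^{2}$. In particular $S_{2}>0$, and using the normalization $(1-p^{-n})\sum_{j\geq 0}p^{-nj}=1$ one obtains $S_{1}\leq 1/m^{2}$; dropping $-S_{2}$ yields $G(x)\leq K_{2}\|x\|_{p}^{-n}$. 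For the lower bound, the hypothesis $\max\{|\boldsymbol{\psi}_{1}|,|\boldsymbol{\psi}_{2}|\}\geq 1$ gives $[\max(\cdot)]^{-\alpha}\leq 1$ and hence $(m^{2}+[\max(\cdot)]^{-\alpha})^{-1}\geq 1/(m^{2}+1)$; applying the same normalization to $S_{1}$ gives $S_{1}\geq 1/(m^{2}+1)$, while $S_{2}\leq 1/m^{2}$, so $S_{1}-S_{2}\geq 1/(m^{2}+1)-1/m^{2}=-K_{1}$.

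For part (iii), the expression in (i) depends on $x$ only through $\gamma=\mathrm{ord}(x)$, i.e.\ only through $\|x\|_{p}$, so $G$ is radial; since every quantity appearing in the formula is real, $G$ is real-valued. Being radial, $G$ is locally constant by Lemma \ref{radial-locally constant}, and hence continuous by Remark \ref{Obs}(i).

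The main obstacle is really the bookkeeping in (i): matching the Jacobian $p^{nj}$ from the substitution $z=p^{j}w$ against the three cases in (\ref{formula1}), and correctly reindexing the tail $j\leq 0$ so that the answer collapses into the compact form stated. Once (i) is in hand, (ii) reduces to the two-sided estimate $1/(m^{2}+1)\leq(m^{2}+t)^{-1}\leq 1/m^{2}$ on $t\in[0,1]$, and (iii) is immediate from the explicit formula.
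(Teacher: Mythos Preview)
Your proposal is correct and follows essentially the same route as the paper: for (i) the same change of variables $w=p^{\gamma}\xi$, sphere decomposition, and application of formula~(\ref{formula1}); for (ii) the identical two-sided estimate on $(m^{2}+t)^{-1}$ combined with $(1-p^{-n})\sum_{j\geq 0}p^{-nj}=1$; and for (iii) the same appeal to radiality and Remark~\ref{Obs}(i). The only cosmetic difference is that the paper cites Lemma~\ref{Green_function} alongside (i) in part (iii), whereas you extract radiality and real-valuedness directly from the explicit series.
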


\begin{proof}
\begin{enumerate}
\item[(i)] Let $x=p^{\gamma}x_{0}\neq 0$ with $\gamma \in \mathbb{Z}$ and
$||x_{0}||_{p}=1$. Then,
\begin{equation*}
G(x)=\int_{\mathbf{\mathbb{Q}}_{p}^{n}}\chi_{p}(-p^{\gamma}\xi\cdot
x_{0})\left(\frac{1}{m^{2}+\left[\max\{|\boldsymbol{\psi}_{1}(||\xi||_{p})|,|\boldsymbol{\psi}_{2}(||\xi||_{p})|\}\right]^{-\alpha}}\right)d^{n}\xi.
\end{equation*}
We make a change of variables, namely $z=p^{\gamma}\xi$, we have
that $G(x)$ is exactly
\begin{align*}
&||x||_{p}^{-n}\hspace{-0.4cm}\sum_{-\infty<j<\infty}\hspace{-0.2cm}\left(\frac{1}{m^{2}+\left[\max\{|\boldsymbol{\psi}_{1}(p^{\gamma}||z||_{p})|,|\boldsymbol{\psi}_{2}(p^{\gamma}||z||_{p})|\}\right]^{-\alpha}}\right)\hspace{-0.2cm}\int\limits_{||z||_{p}=p^{j}}\hspace{-0.3cm}\chi_{p}(-z\cdot
x_{0})d^{n}z\\
&=||x||_{p}^{-n}\hspace{-0.4cm}\sum_{-\infty<j<\infty}\hspace{-0.2cm}\left(\frac{1}{m^{2}+\left[\max\{|\boldsymbol{\psi}_{1}(p^{\gamma+j})|,|\boldsymbol{\psi}_{2}(p^{\gamma+j})|\}\right]^{-\alpha}}\right)\hspace{-0.2cm}\int\limits_{||p^{j}z||_{p}=1}\hspace{-0.3cm}\chi_{p}(-z\cdot
x_{0})d^{n}z\\
&=||x||_{p}^{-n}\hspace{-0.4cm}\sum_{-\infty<j<\infty}\hspace{-0.2cm}\left(\frac{p^{nj}}{m^{2}+\left[\max\{|\boldsymbol{\psi}_{1}(p^{\gamma+j})|,|\boldsymbol{\psi}_{2}(p^{\gamma+j})|\}\right]^{-\alpha}}\right)\hspace{-0.2cm}\int\limits_{||z||_{p}=1}\hspace{-0.3cm}\chi_{p}(-p^{-j}x_{0}\cdot
z)d^{n}z
\end{align*}
By using the formula (\ref{formula1}), we obtain the desired
equality.
\item[(ii)] By $(i)$ and taking
$K_{2}:=\frac{1}{m^{2}}$ we have that
\begin{eqnarray*}
G(x)&\leq& ||x||_{p}^{-n}\left\{(1-p^{-n})\sum_{j=0}^{\infty}\left(\frac{p^{-nj}}{m^{2}+\left[\max\{|\boldsymbol{\psi}_{1}(p^{\gamma-j})|,|\boldsymbol{\psi}_{2}(p^{\gamma-j})|\}\right]^{-\alpha}}\right)\right\}\\
&\leq&||x||_{p}^{-n}\left\{\frac{1}{m^{2}}\sum_{j=0}^{\infty}\left(p^{-nj}-p^{-n(j+1)}\right)\right\}\\
&=& K_{2}||x||_{p}^{-n}.
\end{eqnarray*}
On the other hand, by (i) and taking
$K_{1}:=\frac{1}{m^{2}(m^{2}+1)}$ we have that
\begin{eqnarray*}
G(x)&\geq & ||x||_{p}^{-n}\left\{(1-p^{-n})\sum_{j=0}^{\infty}\left(\frac{p^{-nj}}{m^{2}+1}\right)-\frac{1}{m^{2}+\left[\max\{|\boldsymbol{\psi}_{1}(p^{\gamma+1})|,|\boldsymbol{\psi}_{2}(p^{\gamma+1})|\}\right]^{-\alpha}}\right\}\\\
&\geq&||x||_{p}^{-n}\left\{\frac{1}{m^{2}+1}\sum_{j=0}^{\infty}\left(p^{-nj}-p^{-n(j+1)}\right)-\frac{1}{m^{2}}\right\}\\
&=&-K_{1}||x||_{p}^{-n}.
\end{eqnarray*}
\item[(iii)] Is a direct consequence of $(i)$, Lemma
\ref{Green_function} and Remark \ref{Obs}-$(i)$.
\end{enumerate}
\end{proof}

\subsection{The heat kernel }
We assume throughout this section that $|\boldsymbol{\psi}_{1}|$ and
$|\boldsymbol{\psi}_{2}|$ are increasing functions with respect to
$||\cdot||_{p}$.

We define the heat Kernel (also called fundamental solution)
attached to operator $\mathcal{A}^{\alpha}$ as
\begin{equation}
Z_{t}(x)=Z(x,t):=\int_{\mathbb{Q}_{p}^{n}}\chi_{p}\left(-x\cdot
\xi\right)e^{-t\left[\max\{|\boldsymbol{\psi}_{1}(||x||_{p})|,|\boldsymbol{\psi}_{2}(||x||_{p})|\}\right]^{-\alpha}}
d^{n}\xi, \label{Def_Z}
\end{equation}
for $x\in\mathbb{Q}_{p}^{n}$ and $t\geq 0$.

\begin{remark}\label{Obs_Z}
 If $t=0$, then by
\cite[Example 4.9.1]{Albeverio et al} we have that
\begin{equation*}
Z_{0}(x)=\int_{\mathbb{Q}_{p}^{n}}\chi_{p}\left(-x\cdot
\xi\right)d^{n}\xi=\delta \in
\mathcal{D}^{\prime}(\mathbb{Q}_{p}^{n}).
\end{equation*}
For $t>0$ and by a direct calculation one verifies that
$e^{-t\left[\max\{|\boldsymbol{\psi}_{1}|,|\boldsymbol{\psi}_{2}|\}\right]^{-\alpha}}\notin
L^{1}(\mathbb{Q}_{p}^{n})$. Now, by using the fact that
$e^{-t\left[\max\{|\boldsymbol{\psi}_{1}|,|\boldsymbol{\psi}_{2}|\}\right]^{-\alpha}}\in
L^{1}_{loc}(\mathbb{Q}_{p}^{n})$, then
$e^{-t\left[\max\{|\boldsymbol{\psi}_{1}|,|\boldsymbol{\psi}_{2}|\}\right]^{-\alpha}}$
defines a regular distribution. Therefore, by \cite[Proposition
4.9.1]{Albeverio et al} we have that $Z_{t}(x)\in
\mathcal{D}^{\prime}(\mathbb{Q}_{p}^{n})$.
\end{remark}

\begin{theorem}\label{theorem_Z}
$Z_{t}(x)\leq 0$ for all $x\in \mathbb{Q}_{p}^{n}\backslash
\left\{0\right\}$ and $t>0$.
\end{theorem}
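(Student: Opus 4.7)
The plan is to mirror the argument used in the proof of Theorem \ref{Theorem_Convolution_Kernel}-(ii), since the integrand defining $Z_t(x)$ has exactly the same radial structure as that of $K_\alpha$, with the symbol $[\max\{|\boldsymbol{\psi}_1|,|\boldsymbol{\psi}_2|\}]^{-\alpha}$ replaced by $e^{-t[\max\{|\boldsymbol{\psi}_1|,|\boldsymbol{\psi}_2|\}]^{-\alpha}}$. First I would write $x=p^\gamma x_0$ with $||x_0||_p=1$, perform the change of variables $w=p^\gamma\xi$, decompose $\mathbb{Q}_p^n$ into the disjoint union of the spheres $\{||w||_p=p^j\}$, $j\in\mathbb{Z}$, then substitute $z=p^jw$ on each sphere and apply formula (\ref{formula1}). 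Only the spheres with $j\leq 0$ and $j=1$ contribute, and after reindexing one obtains
$$Z_t(x)=||x||_p^{-n}\left\{(1-p^{-n})\sum_{j=0}^\infty A_j\,p^{-nj}-B\right\},$$
where, writing $M(y):=\max\{|\boldsymbol{\psi}_1(y)|,|\boldsymbol{\psi}_2(y)|\}$,
$$A_j:=e^{-t[M(||x||_p^{-1}p^{-j})]^{-\alpha}},\qquad B:=e^{-t[M(||x||_p^{-1}p)]^{-\alpha}}.$$

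Next, using the identity $(1-p^{-n})\sum_{j=0}^\infty p^{-nj}=1$ to absorb $-B$ into the series, I would rewrite this as
$$Z_t(x)=||x||_p^{-n}(1-p^{-n})\sum_{j=0}^\infty (A_j-B)\,p^{-nj}.$$
The decisive step is then the pointwise inequality $A_j\leq B$ for every $j\geq 0$. Indeed, $p>1$ and $j\geq 0$ force $||x||_p^{-1}p^{-j}\leq ||x||_p^{-1}<||x||_p^{-1}p$, and the standing monotonicity hypothesis on $|\boldsymbol{\psi}_1|$ and $|\boldsymbol{\psi}_2|$ passes to their pointwise maximum $M$, giving $M(||x||_p^{-1}p^{-j})\leq M(||x||_p^{-1}p)$. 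Since $\alpha>0$, the map $y\mapsto y^{-\alpha}$ is decreasing on $(0,\infty)$, so $[M(||x||_p^{-1}p^{-j})]^{-\alpha}\geq[M(||x||_p^{-1}p)]^{-\alpha}$; multiplying by $-t<0$ and exponentiating then yields $A_j\leq B$.

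Once the pointwise inequality is in hand, every summand $(A_j-B)p^{-nj}$ is $\leq 0$ while the prefactor $||x||_p^{-n}(1-p^{-n})$ is strictly positive, and $Z_t(x)\leq 0$ follows at once. I expect the monotonicity comparison to be the substantive point of the proof: it is precisely where the extra hypothesis imposed at the start of this subsection enters (Theorem \ref{Theorem_Convolution_Kernel} did not require it), and it is exactly the sign-reversal mechanism that underlies the physical interpretation mentioned in the introduction. The remaining bookkeeping, namely Fubini's theorem, the $p$-adic change of variable, and absolute convergence of the resulting series (guaranteed by $0<A_j\leq 1$ together with the geometric factor $p^{-nj}$), runs parallel to the proof of Theorem \ref{Theorem_Convolution_Kernel}-(ii) and introduces no new difficulty.
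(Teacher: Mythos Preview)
Your proposal is correct, and the derivation of the series expansion for $Z_t(x)$ is identical to the paper's. The only point of divergence is the concluding inequality. The paper, after obtaining the same expansion, invokes Hypothesis~A(ii) and splits into three cases according to whether $||x||_p>p^{-r}$, $||x||_p=p^{-r}$, or $||x||_p<p^{-r}$, in order to identify which of $|\boldsymbol{\psi}_1|$, $|\boldsymbol{\psi}_2|$ realizes the maximum at each argument; in each case it then bounds the series by $A_0$ and compares $A_0$ with $B$ using the monotonicity of the relevant $|\boldsymbol{\psi}_i|$. Your route is more direct: you observe that the pointwise maximum $M$ of two increasing radial functions is itself increasing, which lets you compare each $A_j$ to $B$ in one stroke without any case analysis and without appealing to Hypothesis~A(ii) at all. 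Both arguments rest on the same standing monotonicity assumption stated at the beginning of the subsection; yours simply packages it more efficiently.
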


\begin{proof}
Let $x=p^{\gamma}x_{0}\neq 0$ such that $\gamma\in \mathbb{Z}$ and
$||x_{0}||_{p}=1$. Then, by (\ref{Def_Z}) and using the changes of
variables $w=p^{\gamma}\xi$ and $z=p^{j}w$, respectively, we have
that
\begin{eqnarray*}
Z(x,t)&=&||x||_{p}^{-n}\int_{\mathbf{\mathbb{Q}}_{p}^{n}}\chi_{p}(-w\cdot x_{0})e^{-t\left[\max\{|\boldsymbol{\psi}_{1}(||w||_{p}p^{\gamma})|,|\boldsymbol{\psi}_{2}(||w||_{p}p^{\gamma})|\}\right]^{-\alpha}}d^{n}w\\
&=&||x||_{p}^{-n}\sum_{-\infty<j<\infty}e^{-t\left[\max\{|\boldsymbol{\psi}_{1}(p^{\gamma+j})|,|\boldsymbol{\psi}_{2}(p^{\gamma+j})|\}\right]^{-\alpha}}\hspace{-0.5cm}\int\limits_{||p^{j}w||_{p}=1}\hspace{-0.2cm}\chi_{p}(-w\cdot x_{0})d^{n}w\\
&=&||x||_{p}^{-n}\hspace{-0.2cm}\sum_{-\infty<j<\infty}e^{-t\left[\max\{|\boldsymbol{\psi}_{1}(p^{\gamma+j})|,|\boldsymbol{\psi}_{2}(p^{\gamma+j})|\}\right]^{-\alpha}}p^{nj}\hspace{-0.5cm}\int\limits_{||z||_{p}=1}\hspace{-0.2cm}\chi_{p}(-p^{-j}x_{0}\cdot
z)d^{n}z.
\end{eqnarray*}
By using the formula (\ref{formula1}) one verifies that
\begin{equation} \label{expansion_Z}
\begin{split}
Z(x,t)&=||x||_{p}^{-n}\left\{(1-p^{-n})\sum_{j=0}^{\infty}\Bigg.e^{-t\left[\max\{|\boldsymbol{\psi}_{1}(||x||_{p}^{-1}p^{-j})|,|\boldsymbol{\psi}_{2}(||x||_{p}^{-1}p^{-j})|\}\right]^{-\alpha}}p^{-nj}\right.\\
&\quad
\Bigg.-e^{-t\left[\max\{|\boldsymbol{\psi}_{1}(||x||_{p}^{-1}p)|,|\boldsymbol{\psi}_{2}(||x||_{p}^{-1}p)|\}\right]^{-\alpha}}\Bigg\}.
\end{split}
\end{equation}
Now, we need consider three cases for $||x||_{p}$.\\
\case $||x||_{p}>p^{-r}$. In this case, we have that
$||x||_{p}^{-1}<p^{r}$. Therefore, $||x||_{p}^{-1}p^{-j}\leq p^{r}$
for all $j\geq -1$. Then, by (\ref{expansion_Z}) and Definition
\ref{Hypothesis_A} we have that
\begin{eqnarray*}
Z(x,t)&=&||x||_{p}^{-n}\left\{(1-p^{-n})\sum_{j=0}^{\infty}e^{-t|\boldsymbol{\psi}_{1}(||x||_{p}^{-1}p^{-j})|^{-\alpha}}p^{-nj}-e^{-t|\boldsymbol{\psi}_{1}(||x||_{p}^{-1}p)|^{-\alpha}}
\right\}\\
&\leq&
||x||_{p}^{-n}\left\{e^{-t|\boldsymbol{\psi}_{1}(||x||_{p}^{-1})|^{-\alpha}}-e^{-t|\boldsymbol{\psi}_{1}(||x||_{p}^{-1}p)|^{-\alpha}}\right\}\\
&\leq& 0.
\end{eqnarray*}
\case $||x||_{p}=p^{-r}$. In this case, we have that
$||x||_{p}^{-1}=p^{r}$. Therefore, $||x||_{p}^{-1}p^{-j}\leq p^{r}$
for all $j\geq 0$ and moreover, $||x||_{p}^{-1}p>p^{r}$. Then, by
(\ref{expansion_Z}) and Definition \ref{Hypothesis_A} we have that
\begin{eqnarray*}
Z(x,t)&=&||x||_{p}^{-n}\left\{(1-p^{-n})\sum_{j=0}^{\infty}e^{-t|\boldsymbol{\psi}_{1}(||x||_{p}^{-1}p^{-j})|^{-\alpha}}p^{-nj}-e^{-t|\boldsymbol{\psi}_{2}(||x||_{p}^{-1}p)|^{-\alpha}}
\right\}\\
&\leq&
||x||_{p}^{-n}\left\{e^{-t|\boldsymbol{\psi}_{1}(||x||_{p}^{-1})|^{-\alpha}}-e^{-t|\boldsymbol{\psi}_{2}(||x||_{p}^{-1}p)|^{-\alpha}}\right\}\\
&\leq& 0.
\end{eqnarray*}
\case $||x||_{p}<p^{-r}$. In this case, we have that
$||x||_{p}^{-1}>p^{r}$. Therefore, proceeding analogously as in the
previous cases, we have that
\begin{eqnarray*}
Z(x,t)&\leq&||x||_{p}^{-n}\left\{e^{-t|\boldsymbol{\psi}_{2}(||x||_{p}^{-1})|^{-\alpha}}-e^{-t|\boldsymbol{\psi}_{2}(||x||_{p}^{-1}p)|^{-\alpha}}\right\}\\
&\leq& 0.
\end{eqnarray*}
\end{proof}

\begin{remark}\label{Application_Z}
Consider the following $p$-adic heat equation or Cauchy problem
given by
\begin{equation}
\left\{
\begin{array}{ll}
\frac{\partial u}{\partial t}(x,t)=\mathcal{A}^{\alpha}u(x,t), \ \  t\in[0,\infty), \ \ x\in\mathbb{Q}_{p}^{n}  \\
&  \\
u(x,0)=u_{0}(x)\in \mathcal{D}(\mathbb{Q}_{p}^{n})\text{, } &
\end{array}
\right.  \label{Cauchy}
\end{equation}
where $\mathcal{A}^{\alpha}$ is the pseudo-differential operator
previously defined.\\
By proceeding as in the proof of \cite[Proposition 1]{To-Z}, we have
that
\begin{equation*}
u(x,t):=\int_{\mathbf{\mathbb{Q}}_{p}^{n}}\chi _{p}\left( -x\cdot
\xi \right)
e^{-t\left[\max\{|\boldsymbol{\psi}_{1}(||\xi||_{p})|,|\boldsymbol{\psi}_{2}(||\xi||_{p})|\}\right]^{-\alpha}}\widehat{u_{0}}(\xi
)d^{n}\xi,
\end{equation*}
with, $u_{0}(x)\in \mathcal{D}(\mathbb{Q}_{p}^{n})$,
$x\in\mathbb{Q}_{p}^{n}$ and $t\geq 0$, is the unique (classical)
solution of the Cauchy problem (\ref{Cauchy}). Moreover, by
\cite[Lemma 3]{To-Z} we have that $u(x,t)=Z_{t}(x)\ast u_{0}$. \\
Since the heat equation describes the distribution of heat (or
temperature variations) in a region along the course of time, then
by Theorem \ref{theorem_Z} we have that equation (\ref{Cauchy})
describes the cooling (or loss of heat) in a given region over time.
\end{remark}

\end{document}